\newtheorem{claim}{Claim}
\newtheorem{theorem}{Theorem}[section]
\newtheorem{lemma}[theorem]{Lemma}
\newcommand{\tyler}[1]{\textcolor{red}{\textbf{Tyler: } #1}}
\newcommand{\toapp}[1]{{}}   
\newcommand{\old}[1]{{}}  
\newcommand{\referee}[1]{{}}  
\newcommand{\full}[1]{{#1}}  
\newcommand{\short}[1]{{}}  
\begin{document}

\title{TSP With Locational Uncertainty: The Adversarial Model}
\author[2]{Gui Citovsky}
\author[1]{Tyler Mayer}
\author[1]{Joseph S. B. Mitchell}
\affil[1]{Dept. of Applied Mathematics and Statistics, Stony Brook University.\\
\texttt{\{tyler.mayer, joseph.mitchell\}@stonybrook.edu}}
\affil[2]{Google Manhattan.\\
\texttt{gcitovsky@gmail.com}}
\date{}

\maketitle

\begin{abstract}
In this paper we study a natural special case of the Traveling
Salesman Problem (TSP) with point-locational-uncertainty which we will
call the {\em adversarial TSP} problem (ATSP).  Given a metric space
$(X, d)$ and a set of subsets $R = \{R_1, R_2, ... , R_n\} : R_i
\subseteq X$, the goal is to devise an ordering of the regions,
$\sigma_R$, that the tour will visit such that when a single point is
chosen from each region, the induced tour over those points in the
ordering prescribed by $\sigma_R$ is as short as possible.  Unlike the
classical locational-uncertainty-TSP problem, which focuses on
minimizing the expected length of such a tour when the point within
each region is chosen according to some probability distribution,
here, we focus on the {\em adversarial model} in which once the choice
of $\sigma_R$ is announced, an adversary selects a point from each
region in order to make the resulting tour as long as possible.  In
other words, we consider an offline problem in which the goal is to
determine an ordering of the regions $R$ that is optimal with respect
to the ``worst'' point possible within each region being chosen by an
adversary, who knows the chosen ordering. We give a $3$-approximation
when $R$ is a set of arbitrary regions/sets of points in a metric
space.  We show how geometry leads to improved constant factor
approximations when regions are parallel line segments of the same
lengths, and a polynomial-time approximation scheme (PTAS) for the
important special case in which $R$ is a set of disjoint unit disks in
the plane.
\end{abstract}

\referee{Rvr 2: - I am not sure if the term "adversarial model" is so good, because it (at least for me) suggests that an online version of TSP is studied. Maybe either choose a different name or make sure that this is ruled out. Be aware that more than half of your audience are not native English speakers!}

\referee{Rvr 3: Throughout the paper, sometimes length and sometimes weight is used to refer to the weight/length of tours and cycles. Please be consistent.}

\old{\tyler{Abstract: $3$ apx only holds for even regions ?3 + 1/n? }}

\section{Introduction}

We consider the travelling salesperson problem (TSP) on uncertain sites.  We are given as input a set of $n$ uncertainty regions $R = \{R_1, R_2,\ldots, R_n\}$,  each of which is known to contain exactly one site that must be visited by the tour.  In the standard TSP, the regions $R_i$ are singleton points.  In the {\em TSP with neighborhoods} (TSPN), or {\em one-of-a-set TSP}, model, the goal is to compute an optimal tour that visits some point of each region $R_i$, and we are allowed to pick any point $p_i\in R_i$ to visit, making this choice in the most advantageous way possible, to minimize the length of the resulting tour that we compute.  In models of TSP with locational uncertainty, the regions $R_i$ model the support sets of probability distributions for the uncertain locations of the (random variable) sites $p_i$. The objective, then, may be to optimize some statistic of the tour length; e.g., we may wish to minimize the expected tour length, or minimize the probability that the tour length is greater than some threshold, etc.  In this paper, we study the version of the stochastic TSP model in which our goal is to optimize for the {\em worst case} choice of $p_i$ within each $R_i$.  We call this problem the {\em adversarial TSP}, or ATSP, as one can think of the choice of $p_i$ within each $R_i$ as being made by an adversary. Our goal is to compute a permutation $\sigma_R$ on the regions $R_i$ so that we minimize the length of the resulting tour on the points $p_i$, assuming that an adversary makes the choice of $p_i\in R_i$, given our announced permutation $\sigma_R$ on the regions.  While the TSPN seeks an optimal tour for the {\em best} choices of $p_i\in R_i$, the ATSP seeks an optimal tour for the {\em worst} choices of $p_i\in R_i$.

Another motivation for the ATSP solution is that one may seek a single permutation of the set of input sets $R_i$ so that the permutation is ``good'' (controls the worst-case choices of $p_i\in R_i$) for any of the numerous ($|R_1|\cdot|R_2|\cdot |R_3| \cdots |R_n|$) instances of TSP associated with the sets $R_i$, thereby avoiding repeated computations of TSP tours.  In certain vehicle routing applications, it may also be beneficial to establish a fixed ordering of visits to clients, even if the specific locations of these visits may vary in the sets $R_i$.
Further, in locationally uncertain TSP one may expect that probability distributions over the regions $R_i$ are imperfect and not known precisely, and that customer locations are known imprecisely (possibly for privacy concerns, with deliberate noise added for protecting the identity/privacy of users), making it important to optimize over all possible choices of site locations.

In Figure ~\ref{fig:tspc-order} we give a simple example showing that
the ordering given by a TSP on center points
($TSP_c$) can be suboptimal, by at least a factor of $\sqrt{2}$.  The input $R$ is a $\sqrt{n} \times
\sqrt{n}$ grid of vertical unit-length line segments with distance $1$
between midpoints of horizontally adjacent segments and with distance
$1+\epsilon$ between midpoints of vertically adjacent segments.  In Figure \ref{fig:tspn-order} we show that the ordering prescribed by a TSPN over the input regions can be at least a factor $2$ away from optimal.  The input is a set of $n$ segments, $n/2$ of which have length 1, and the remainder have length $\epsilon$; they are arranged in alternating order radially around a point or the boundary of a small circle.
In this case, the TSPN and the $TSP_c$ orderings give
  constant-factor approximations for the adversarial TSP; we will
  discuss this property later.

\begin{figure}
\begin{subfigure}{0.3\textwidth}
\includegraphics[scale = 0.3]{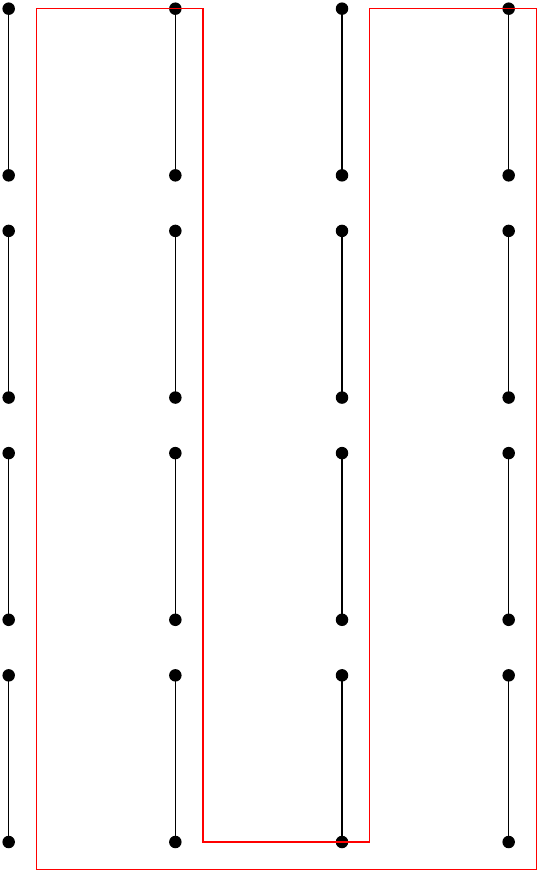}
\caption{Optimal solution $\approx n$}
\label{fig:opt_segs}
\end{subfigure}
\begin{subfigure}{0.3\textwidth}
\includegraphics[scale = 0.3]{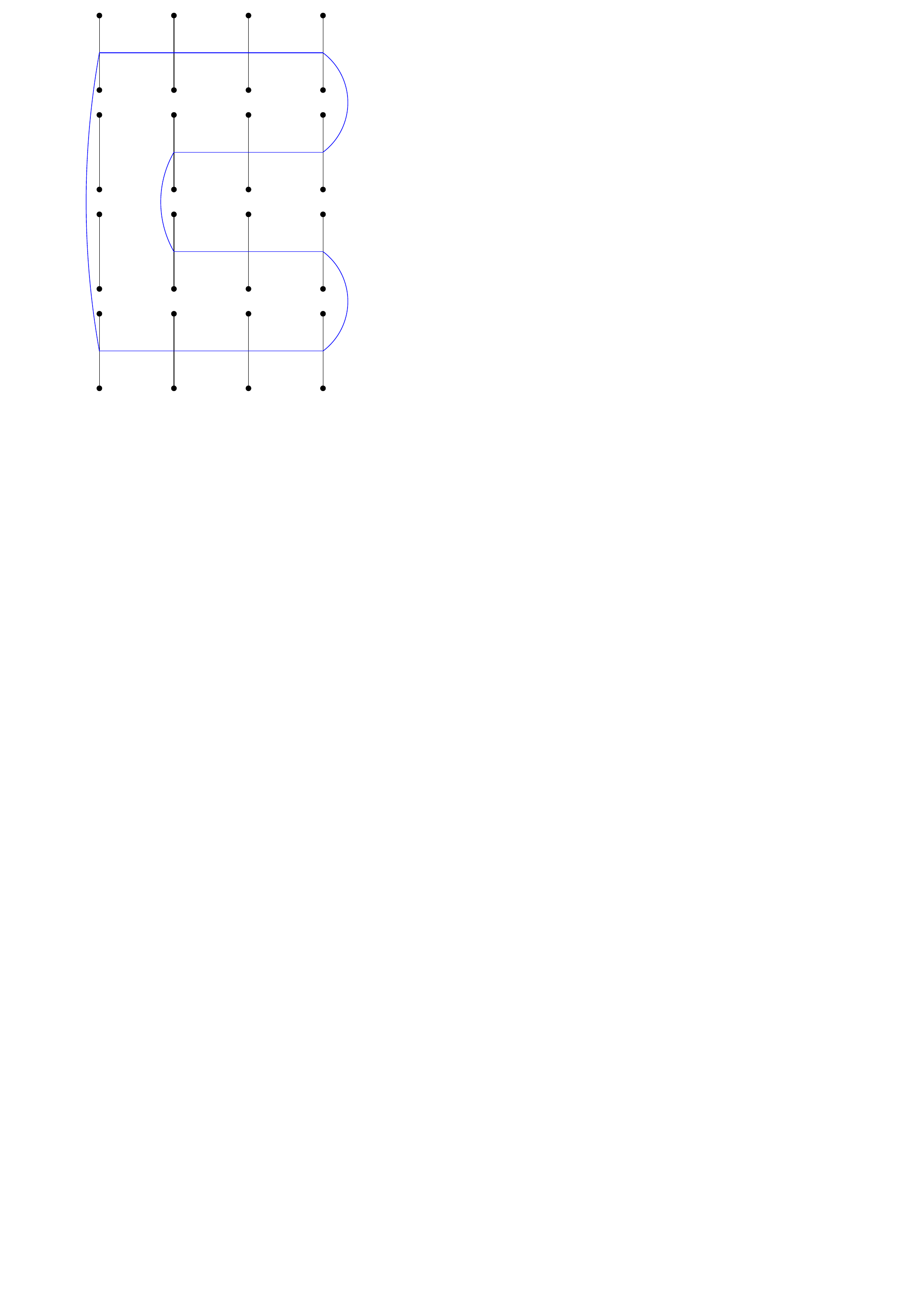}
\caption{$TSP_c$}
\label{tspc}
\end{subfigure}
\begin{subfigure}{0.3\textwidth}
\includegraphics[scale = 0.3]{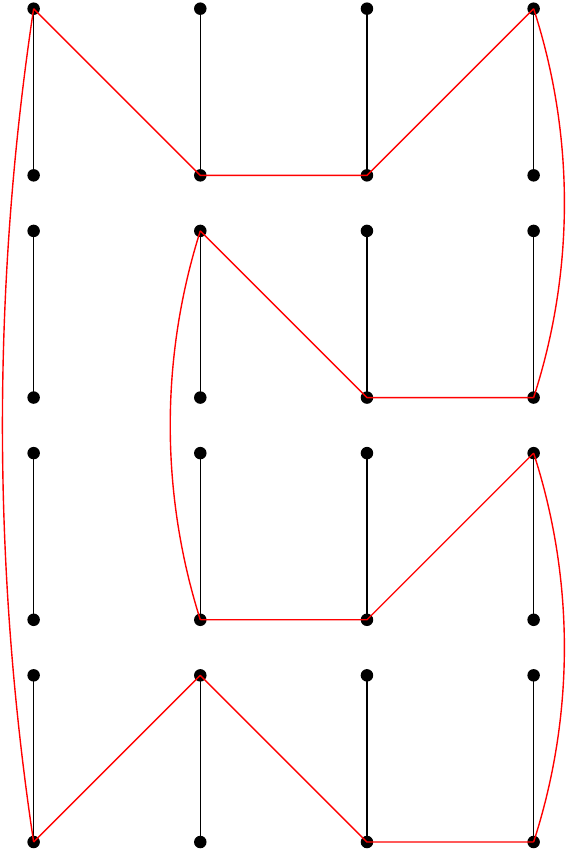}
\caption{Adversarial tour on $TSP_c$ ordering $\approx \sqrt{2}n$}
\label{fig:atspc_segs}
\end{subfigure}
\caption{TSP on center points ordering does not always provide an optimal solution to ATSP.}
\label{fig:tspc-order}
\end{figure}

\begin{figure}
\begin{subfigure}{0.3\textwidth}
\includegraphics[scale = 0.3]{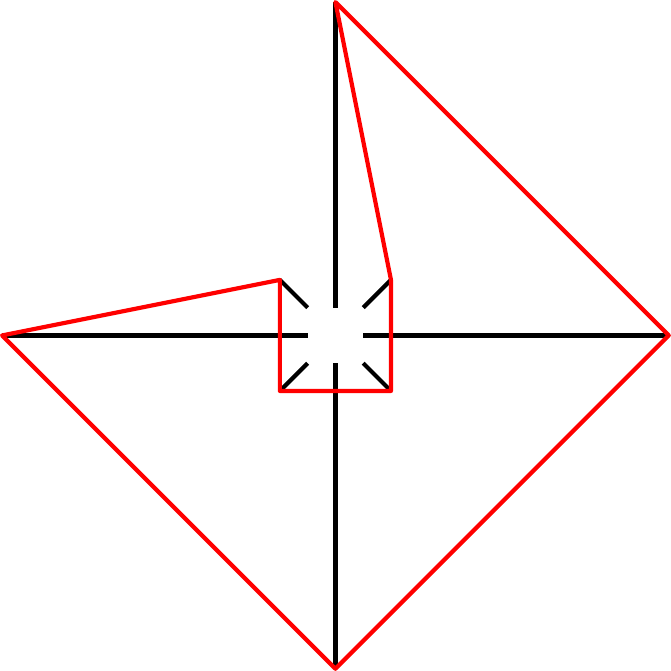}
\caption{Optimal solution $\approx n/2$}
\label{fig:tspn_opt_segs}
\end{subfigure}
\begin{subfigure}{0.3\textwidth}
\includegraphics[scale = 0.3]{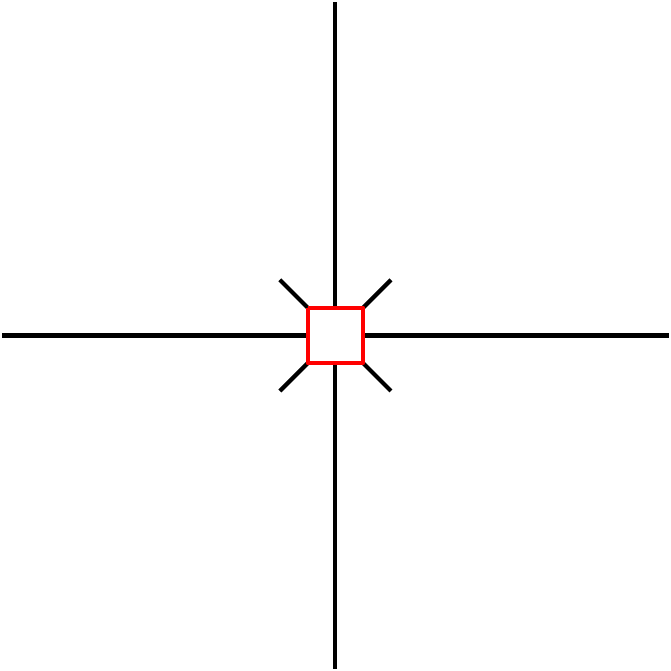}
\caption{TSPN}
\label{fig:tspn_segs}
\end{subfigure}
\begin{subfigure}{0.3\textwidth}
\includegraphics[scale = 0.3]{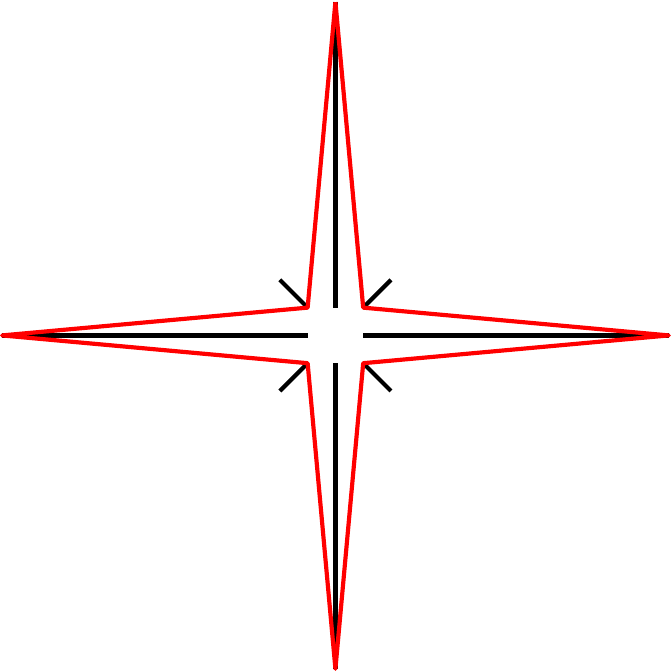}
\caption{Adversarial tour on TSPN ordering $\approx n$}
\label{fig:atspn_segs}
\end{subfigure}
\caption{TSPN ordering does not always provide an optimal solution to ATSP.}
\label{fig:tspn-order}
\end{figure}

In this paper, we initiate the study of the ATSP.  We give a
$3$-approximation when $R$ is a set of arbitrary regions/sets of
points in a metric space.  We exploit geometry to give an improved
approximation bound for the case of regions that are unit line
segments of the same orientation in the plane; we compute a
permutation with adversarial tour length at most $(7/3 +
\epsilon)|OPT| + 1$, where $|OPT|$ is the length of an optimal
solution. We further exploit geometry to give a polynomial time
approximation scheme (PTAS) for the important special case when $R$ is
a set of disjoint unit disks in the plane.

\subsubsection*{Related Work}
Geometric problems on imprecise points have been the subject of many recent investigations. L{\"o}ffler et al. \cite{loffler2010largest} study, given a set of $n$ uncertainty regions in the plane,  the problem of {\em selecting} a single point within each  region so that the area of the resulting convex hull is as large/small as possible.  They show a number of results, including an $O(n^3)$-time and an $O(n^7)$-time exact algorithm for maximizing the area of the convex hull of selected  points when the uncertainty regions are parallel line segments and disjoint axis aligned squares respectively.  They show that this problem is NP-Hard when the regions are line segments with arbitrary orientations.  
In the same paper, L{\"o}ffler et al. show that the problem of selecting a point within each region so that the resulting minimum spanning tree over those points is as small as possible is NP-Hard when the uncertainty regions are overlapping  disks and squares.  In his thesis \cite{fraser2012algorithms}, Fraser extends the prior minimum spanning tree result to show that the problem is still NP-Hard even when the regions are pairwise disjoint. He provides several constant factor approximation algorithms for the special case of disjoint disks in the plane.  Dorrigiv et al. \cite{dorrigiv2015minimum} show that neither the minimization nor the maximization version of this problem admit an FPTAS when the regions are  disjoint disks.  Yang et al. \cite{yang2007minimum} give a PTAS for the minimization version.  In a thesis by Montanari \cite{montanari2015computing}, it is  shown that the minimization version when the input regions are vertically or horizontally aligned segments is NP-Hard and that this problem does not admit a FPTAS.  Interestingly, in another paper by Liu and Montanari \cite{liu2015minimizing} it is shown that selecting a point from each region so that {\em diameter} of a minimum spanning tree on the selected points is minimized is polynomially solvable when the regions are arbitrary sized (possibly overlapping) disks in the plane.

Montanari \cite{montanari2015computing} also studies the problem of placing a single point within each region so that the resulting shortest $s, t$ path is either maximized or minimized.  They show that the minimization version of this problem can be solved in polynomial time in the $L_1$ metric when the polygons are rectilinear (not necessarily disjoint, or convex).  They also show that the maximization version of the problem is NP-Hard to approximate to any factor $(1 - \epsilon) : \epsilon < 1/4$ even in the case where the polygons are vertically aligned segments.
 
\full{Considering regions consisting of discrete points, Daescu et al. \cite{daescu2010np}  show that, given $n$ points in the plane grouped into $m$ disjoint subsets, it is NP-Hard to select a single point from each subset so that the resulting minimum spanning tree has minimum/maximum length.  In the same paper they also show it is NP-Hard to select a single point from each subset so that the resulting convex hull has minimum perimeter, and give a $\pi$-approximation algorithm for this problem.  }

There has been a considerable amount of work done on studying TSP variants with point-existential uncertainty.  Two main models in the literature are the {\em a priori model} proposed by Bertsimas et al. \cite{bertsimas1990priori} and Jaillet~\cite{jaillet1988priori}, in which each point $x_i$ (with a known, fixed, location) is independently present with probability $p_i$, and the {\em universal model} \cite{jia2005universal}, which asks for a tour over the entire data set such that for {\em any} subset of active requests, the master tour restricted to this active subset is {\em close} to optimal.

\full{
In the {\em a priori} model, Jaillet~\cite{jaillet1988priori} derives a closed form expression for the expected tour length under a Bernoulli process, and several discrete distributions over the node presences.  He shows that there is an upper bound for the heuristic of taking the TSP on the entire point set under node-invariant probability distributions.  He shows that for general probability distributions this heuristic can be arbitrarily bad.  Further, he considers the case when the point set lies in convex position.  He shows, in this case, that the optimal a priori solution and the TSP on the entire point set are the same. Shalekamp et al. \cite{schalekamp2008algorithms} show  that when the distribution satisfies a tree metric, they can solve both, the a priori and the universal problem, in polynomial time.  As a corollary due to Fakcharoenphol et al. \cite{fakcharoenphol2003tight} any metric can be probabilistically embedded in a tree metric with $O(log(n))$ stretch, and therefore the a priori TSP can be $O(\log(n))$ approximated without knowledge of the probability distribution.  Later, Schmoys et al. \cite{shmoys2008constant} show the first constant factor approximation algorithms for the a priori TSP problem.  They show a deterministic 8-approximation algorithm and a randomized 4-approximation algorithm.

Jia et al. \cite{jia2005universal} first proposed the Universal approximation model for the TSP, Steiner tree and Set Cover problems.  They show an $O(\log^4(n)/\log(\log(n)))$-approximation algorithm in general metric spaces and an $O(\log(n))$-approximation algorithm for metric spaces with bounded doubling dimension for both the universal TSP and Steiner problems.  Hajiaghayi et al. \cite{hajiaghayi2006improved} gave the first family of examples (even for points in the plane) for which there is no constant competitive ratio for the universal TSP.  Their examples show a $\Omega(\sqrt[4]{\log(n)/\log(\log(n))})$ lower bound for the universal TSP independent of an algorithm.}

The TSP with neighborhoods (TSPN) problem was introduced by Arkin and Hassin~\cite{arkin1994approximation} and has been studied extensively from the perspective of approximation algorithms, particularly in geometric domains (see, e.g.,\cite{m-spn-04}).
Kamousi et al \cite{kamousi2013euclidean} study a stochastic TSPN model where each client lies within a region, a disk with a fixed center and stochastic radius.  
\full{They show that in the offline version, where centers are given along with the probability distributions for each disk radius one can compute an $O(\log\log(n))$-approximation, however in the online problem, when the radii of the disks are only revealed when the traveling salesman reaches the boundary of that disk, one can compute an $O(\log(n))$-approximation to the tour that minimizes the expected distance traveled.  
Further they show that if the centers as well as the mean for each radius is given, one can compute an $O(1)$-approximation.}

\subsubsection*{Preliminaries}
We are given regions $R=\{R_1,R_2,\ldots,R_n\}$, with each $R_i$ a
subset of a metric space $(X,d)$.  We seek a cyclic permutation
$\sigma=(\sigma_1,\sigma_2,\ldots,\sigma_n)$ (an {\em ordering}) of the regions $R$, in order to minimize
the length, $\max_{p_i\in R_i}
[d(p_{\sigma_1},p_{\sigma_2})+d(p_{\sigma_2},p_{\sigma_3})+\cdots+d(p_{\sigma_{n-1}},p_{\sigma_n})+d(p_{\sigma_n},p_{\sigma_1})]$,
of a cycle on adversarial choices of the points in the respective
regions.  We let $\sigma_R^*$ denote an optimal ordering for $R$, and
we let $|OPT|$ denote the length of the corresponding cycle, $OPT$, that is
based on the optimal adversarial choices of the points $p_i\in R_i$,
for the ordering $\sigma_R^*$.
\full{The following lemma gives upper/lower bounds on the length, $|OPT|$, of $OPT$.}
\short{The following lemmas are shown in the full paper~\cite{full}.}

\begin{lemma}
The length, $|OPT|$, of $OPT$ satisfies $TSPN^* \leq |OPT| \leq
TSPN^*+\sum_{R_i \in R} 2\cdot diam(R_i)$, where $TSPN^*$ is the
length of an optimal TSPN tour on the regions $R$, and $diam(R_i)$
denotes the diameter of region $R_i\in R$.
\label{lem:bounds}
\end{lemma}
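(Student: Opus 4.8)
The plan is to establish the two inequalities separately, in each case comparing the adversarial tour against an optimal TSPN tour via a triangle-inequality argument. For the lower bound $TSPN^* \le |OPT|$, I would observe that the adversarial cycle $OPT$ itself selects exactly one point $p_i \in R_i$ from each region (the adversary's worst-case choice for the optimal ordering $\sigma_R^*$) and connects these points into a cycle. Such a cycle is, by definition, a feasible TSPN tour on $R$, since it visits one point of each $R_i$. As $TSPN^*$ is the minimum length over all feasible TSPN tours, it can be no longer than this particular cycle, giving $TSPN^* \le |OPT|$ immediately.

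For the upper bound I would use an optimal TSPN tour as a witness ordering. Let $\sigma^{TSPN}$ be the cyclic ordering realized by an optimal TSPN tour, and let $q_i \in R_i$ be the points it visits, so that the sum of edge lengths $d(q_{\sigma_j},q_{\sigma_{j+1}})$ around the cycle equals $TSPN^*$. Since $|OPT|$ is the minimum over all orderings of the adversarial (worst-point) tour length, it suffices to bound the adversarial tour length when the ordering is fixed to $\sigma^{TSPN}$. Letting $p_i \in R_i$ denote the adversary's worst choices for this fixed ordering, I would apply the triangle inequality to each edge of the cycle,
\[
d(p_{\sigma_j}, p_{\sigma_{j+1}}) \le d(p_{\sigma_j}, q_{\sigma_j}) + d(q_{\sigma_j}, q_{\sigma_{j+1}}) + d(q_{\sigma_{j+1}}, p_{\sigma_{j+1}}),
\]
and then bound each within-region term $d(p_i, q_i)$ by $diam(R_i)$, since both $p_i$ and $q_i$ lie in $R_i$.

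Summing over all $n$ edges of the cycle, the middle terms $d(q_{\sigma_j},q_{\sigma_{j+1}})$ total exactly $TSPN^*$, while each within-region term $d(p_i,q_i)$ appears in precisely two consecutive edges, namely as the incoming endpoint and the outgoing endpoint at region $R_i$, contributing at most $2\,diam(R_i)$ per region. This yields an adversarial length of at most $TSPN^* + \sum_{R_i \in R} 2\,diam(R_i)$ for the ordering $\sigma^{TSPN}$, and hence the same bound on $|OPT|$, completing the proof. The argument is essentially routine; the only point requiring care is the bookkeeping that each region's detour is counted exactly twice in the telescoped sum, which is precisely what produces the factor of $2$ in the stated bound, and the observation that the adversary's worst choice for the fixed ordering is still bounded edge-by-edge by this triangle-inequality splitting regardless of how the adversary correlates its choices across regions.
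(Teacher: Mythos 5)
Your proof is correct and takes essentially the same approach as the paper: the lower bound observes that $OPT$ is a feasible TSPN tour, and your edge-by-edge triangle-inequality splitting, with each within-region term $d(p_i,q_i)\leq diam(R_i)$ counted twice per region, is exactly the paper's device of adding doubled detour edges $(p_i,q_i)$ to an optimal TSPN tour. If anything, your writeup handles the quantifiers a bit more carefully than the paper's, since you fix the TSPN ordering first and bound the adversary's response to \emph{that} ordering, whereas the paper phrases the detour construction in terms of the adversary's points for $\sigma_R^*$.
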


\full{
\begin{proof}  
The fact that $|OPT|\geq TSPN^*$ follows immediately from the fact
that $OPT$ is a feasible solution to the TSPN. Let $p_i\in R_i$ be the
visitation points (one per region) selected by the adversary, through
which $OPT$ passes, and let $q_i\in R_i$ denote a point of region
$R_i$ visited by an optimal TSPN tour.  By adding doubled copies of
the edges $(p_i,q_i)$ to an optimal TSPN tour, we obtain a tour that
visits the points $p_i$ visited by $OPT$ (selected by the adversary);
thus, $|OPT|$ is at most $TSPN^*+\sum_{R_i\in R} 2\cdot diam(R_i)$.
\end{proof}
}

\begin{lemma}
For a set $R$ of convex regions in the Euclidean plane, and any
ordering $\sigma$ of the regions $R$, any longest cycle corresponding
to an adversarial choice of points $p_i\in R_i$ is a polygonal cycle,
with edges $(p_{\sigma_i},p_{\sigma_{i+1}})$ and with each point
$p_{\sigma_i}$ an extreme point of its corresponding region,
$R_{\sigma_i}$.
\label{lem:vertices}
\end{lemma}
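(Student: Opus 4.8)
The plan is to exploit the fact that the adversary's objective, the length of the cycle for the fixed ordering $\sigma$, is a sum of Euclidean distances, and that such a sum is a \emph{convex} function of any one of the chosen points when the remaining points are held fixed. Throughout I would assume each $R_i$ is compact (bounded and closed) as well as convex, so that the continuous objective attains its maximum over the product $\prod_i R_i$; without compactness a longest cycle need not exist. The polygonal claim is then immediate: in the Euclidean plane the distance $d(p,q)$ is realized by the straight segment $\overline{pq}$, so once the points $p_{\sigma_i}\in R_{\sigma_i}$ are fixed, the cycle attaining the length is exactly the closed polygon with vertices $p_{\sigma_1},\dots,p_{\sigma_n}$ and edges $(p_{\sigma_i},p_{\sigma_{i+1}})$. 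What remains is to show that a maximizing choice places each vertex at an extreme point of its region.

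First I would isolate the dependence on a single region. Fix an index $i$ and hold every $p_{\sigma_j}$ with $j\neq i$ at some value; writing $a=p_{\sigma_{i-1}}$ and $b=p_{\sigma_{i+1}}$ (indices taken cyclically), the cycle length becomes
$$g_i(p) \;=\; d(p,a)+d(p,b)+C,$$
where $C$ collects the lengths of the edges not incident to $p_{\sigma_i}$ and is constant in $p$. Each map $p\mapsto d(p,a)=\|p-a\|$ is convex, being a norm composed with an affine map, so $g_i$ is a convex function of $p$ on the convex set $R_{\sigma_i}$.

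Next I would invoke the standard fact that a convex function attains its maximum over a compact convex set at an extreme point. Concretely, if $x^{\star}$ maximizes a convex $f$ over a compact convex $K$, one writes $x^{\star}=\sum_k \lambda_k e_k$ as a convex combination of extreme points (Minkowski's theorem, or Carath\'eodory in the plane), and Jensen's inequality gives $f(x^{\star})\le \sum_k \lambda_k f(e_k)\le \max_k f(e_k)$, forcing some extreme point $e_k$ to attain the maximum as well. Applied to $g_i$, this says that from any configuration that maximizes the cycle length I can move $p_{\sigma_i}$ to an extreme point of $R_{\sigma_i}$ without decreasing the length; for strictly convex regions such as disks the maximizer is in fact forced onto the boundary, i.e.\ onto an extreme point.

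The main subtlety, and the step I would treat most carefully, is that the length is convex in each coordinate \emph{separately} but not jointly, so I cannot optimize all points simultaneously from a single convexity statement. Instead I would argue by sequential replacement: start from a global maximizer (which exists by compactness) and process $i=1,2,\dots,n$ in turn, each time fixing the other $n-1$ coordinates at their current values and replacing $p_{\sigma_i}$ by an extreme-point maximizer of the convex function $g_i$. Because the current configuration is globally optimal at every step, $g_i$ is already maximized there, so the replacement keeps the length equal to the maximum cycle length for $\sigma$ and leaves the configuration a global maximizer, while leaving the already-processed coordinates (now extreme points) untouched. After all $n$ steps every $p_{\sigma_i}$ is an extreme point of $R_{\sigma_i}$ and the configuration still realizes a longest cycle, which establishes the lemma.
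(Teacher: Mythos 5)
Your proof is correct, but it takes a genuinely different route from the paper's, and it is worth noting that it establishes a slightly different (existential) form of the statement. The paper argues locally at an arbitrary global maximizer: fixing the neighbors $a=p_{\sigma_{i-1}}$ and $b=p_{\sigma_{i+1}}$, optimality of $p_{\sigma_i}$ forces $R_{\sigma_i}$ to lie inside the region $A_i$ bounded by the ellipse with foci $a,b$ and major axis $\lambda=d(a,p_{\sigma_i})+d(p_{\sigma_i},b)$, with $p_{\sigma_i}$ on the bounding ellipse; the tangent line there supports $A_i$, hence $R_{\sigma_i}$, and since the ellipse is strictly convex it meets $R_{\sigma_i}$ only at $p_{\sigma_i}$, making $p_{\sigma_i}$ an exposed (hence extreme) point. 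Because this local condition holds \emph{simultaneously} at every index of any given global maximizer (no points are moved), the paper obtains the lemma's universal form --- \emph{every} longest cycle has all its vertices at extreme points --- in one stroke, with no sequential pass. Your route instead uses the Bauer-type maximum principle (Minkowski decomposition plus Jensen), which only says the coordinate-wise maximum is \emph{attained at some} extreme point; this is exactly why you need the replacement loop, and why you end with existence: \emph{some} longest cycle has all vertices extreme, rather than the stated ``any longest cycle.'' That weaker conclusion is in fact all the paper ever uses (the adversarial value is unchanged if the adversary is restricted to extreme points), and it is arguably the more defensible statement: in degenerate configurations --- e.g.\ $R_{\sigma_i}$ a segment contained in the segment joining $a$ and $b$, where $d(\cdot,a)+d(\cdot,b)$ is constant on $R_{\sigma_i}$ --- non-extreme maximizers exist and the universal form fails; the paper's own proof quietly skirts this, since ``there is a supporting line of $R_{\sigma_i}$ through $p_{\sigma_i}$'' does not by itself imply extremeness (one needs the strict convexity of the nondegenerate ellipse). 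Your explicit compactness hypothesis, needed for a maximizer to exist at all, is likewise a point the paper leaves implicit. In short: the paper's argument is shorter and pointwise and nominally stronger; yours is more robust, makes the separate-versus-joint convexity issue explicit, and transfers verbatim to any normed-plane setting where the ellipse argument would have to be rephrased.
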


\full{
\begin{proof}
Between two consecutive visitation points, the triangle
inequality implies that the cycle is a straight segment (i.e., the
cycle is polygonal, though it may self-intersect).
For the given ordering $\sigma$, the adversary selects the visitation
points $p_{\sigma_i}\in R_{\sigma_i}$ in order to maximize the length
of the cycle. 
Local optimality implies that for fixed choices of $p_{\sigma_{i-1}}$
and $p_{\sigma_{i+1}}$, the point $p_{\sigma_i}$ is chosen within
$R_{\sigma_i}$ in order to maximize
$\lambda=d(p_{\sigma_{i-1}},p_{\sigma_i})+d(p_{\sigma_i},p_{\sigma_{i+1}})$.
This implies that $R_{\sigma_i}$ must lie within the (closed) region
$A_i$ bounded by the ellipse with foci $p_{\sigma_{i-1}}$ and
$p_{\sigma_{i+1}}$ and major axis $\lambda$. The point $p_{\sigma_i}$
lies on the ellipse that bounds $A_i$; thus, the line tangent to $A_i$
at point $p_{\sigma_i}$ is a supporting line of $A_i$ and therefore
also of $R_{\sigma_i}\subseteq A_i$.  Since there is a supporting line
of $R_{\sigma_i}$ that passes through $p_{\sigma_i}$, the point
$p_{\sigma_i}$ is an extreme point of $R_{\sigma_i}$, as claimed.
\end{proof}
}

\section{3-Approximation for Arbitrary Regions in a Metric Space}

We begin by giving a $3$-approximation to the ATSP problem
when $R$ is a set of arbitrary regions in a metric space.
\full{
The main idea of this algorithm is to plan a route assuming that for
any edge of the tour we will be forced to travel the maximum distance
between consecutive regions.  Clearly, this is overestimating the
power of the adversary as he is allowed to choose only a single point
in each region after the ordering is announced, and, in general, there
is not a single point in every region that is furthest from every
other region.}

Consider the complete graph $\hat{G}$ whose nodes are the regions $R$
and whose edges join every pair of regions with an edge, $(R_i,R_j)$,
whose weight is defined to be $w(R_i,R_j)=\max_{s \in R_i, t \in R_j}\{d(s, t)\}$, the maximum
distance between a point $s\in R_i$ and a point $t\in R_j$.
For distinction, we will speak of edge ``weights'' in the graph
$\hat{G}$ and of edge ``lengths'' in the original metric space
$(X,d)$.
\short{It is not hard to see that the edge-weighted graph $\hat{G}$ defines a metric (see the full paper~\cite{full}).}

\full{
\begin{lemma}
The edge-weighted graph $\hat{G}$ is a metric.
\label{lem:metric}
\end{lemma}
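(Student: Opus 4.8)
The plan is to verify, for the weighted complete graph $\hat{G}$, the axioms that make it a metric on its node set: symmetry, non-negativity, and the triangle inequality. Symmetry is immediate, since $w(R_i,R_j)=\max_{s\in R_i,\,t\in R_j} d(s,t)=\max_{t\in R_j,\,s\in R_i} d(t,s)=w(R_j,R_i)$, using that the underlying $d$ is symmetric and that the defining expression is symmetric in the two regions. Non-negativity follows at once from $d\geq 0$. The only substantive axiom is therefore the triangle inequality, and that is where I would put essentially all of the effort.

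To establish $w(R_i,R_k)\leq w(R_i,R_j)+w(R_j,R_k)$ for any three regions, I would use a pivot argument through the intermediate region. Let $s^*\in R_i$ and $u^*\in R_k$ be a pair of points attaining the maximum that defines $w(R_i,R_k)$, so that $w(R_i,R_k)=d(s^*,u^*)$. Now fix an arbitrary point $t\in R_j$; such a point exists because each region is nonempty (indeed each contains a site). The triangle inequality in the underlying metric space $(X,d)$ gives $d(s^*,u^*)\leq d(s^*,t)+d(t,u^*)$. Since $s^*\in R_i$ and $t\in R_j$, we have $d(s^*,t)\leq w(R_i,R_j)$ by definition of the $\max$; similarly $d(t,u^*)\leq w(R_j,R_k)$. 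Chaining these bounds yields $w(R_i,R_k)=d(s^*,u^*)\leq w(R_i,R_j)+w(R_j,R_k)$, as required.

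The key step, and the only place any idea is needed, is the insertion of the pivot $t\in R_j$: the endpoints $s^*,u^*$ are selected to maximize the distance for the \emph{outer} pair $(R_i,R_k)$, but the bound is completed by routing through \emph{any} point of the middle region and then invoking the triangle inequality already enjoyed by $d$. I do not anticipate a genuine obstacle here, as this is the standard ``maximum distance between sets'' estimate. The one point worth a remark is the diagonal: $w(R_i,R_i)=diam(R_i)$ need not vanish, so $\hat{G}$ is a metric only in the sense relevant to our application, namely symmetry together with the triangle inequality on the node set. This suffices for the subsequent TSP-style approximation arguments, since a tour never traverses a self-edge, so the failure of identity-of-indiscernibles is harmless.
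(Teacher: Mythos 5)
Your proof is correct and matches the paper's argument: both establish the triangle inequality by taking a pair $(s^*,u^*)$ realizing the outer maximum $w(R_i,R_k)$ and pivoting through the middle region via the triangle inequality of the underlying metric $d$ (the paper pivots through a particular maximizer $\hat{b}\in R_j$, but as your version makes clear, any point of $R_j$ works, so this is the same argument). Your closing remark that $w(R_i,R_i)=diam(R_i)$ need not vanish is a fair and slightly more careful observation than the paper's passing claim of ``reflexivity,'' and you correctly note it is harmless for the tour-based application.
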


\begin{proof}
Since the weight of edge $(R_i,R_j)$ is $w(R_i,R_j)=\max_{s \in R_i,
t \in R_j}\{d(s, t)\}$, we have that the edge weights of $\hat{G}$
satisfy the non-negativity and reflexivity constraints by definition.
To show that these edge weights also satisfy the triangle inequality,
consider an arbitrary triple of regions $A, B, C$.  Suppose that
$\max_{a \in A, c \in C}\{d(a, c)\}$ is maximized for $a = \hat{a} \in
A, c = \hat{c} \in C$ and that $\max_{a \in A, b \in B}\{d(a, b)\}$ is
maximized for $ b = \hat{b} \in B$. As points
$\hat{a}, \hat{b}, \hat{c}$ all come from a metric space we have that
$d(\hat{a}, \hat{c}) \leq d(\hat{a}, \hat{b}) + d(\hat{b}, \hat{c})$.
Also, by the max distance property, we have that
$d(\hat{a}, \hat{b}) \leq \max_{a \in A, b \in B}\{d(a, b)\}$, and
that $d(\hat{b}, \hat{c}) \leq \max_{b \in B, c \in C}\{d(b, c)\}$.
Therefore $\max_{a \in A, c \in C}\{d(a, c)\} \leq \max_{a \in A,
b \in B}\{d(a, b)\} + \max_{b \in B, c \in C}\{d(b, c)\}$.
\end{proof}
}


\begin{lemma}
An optimal TSP tour in $\hat{G}$ yields a $2$-approximation to the ATSP on~$R$.
\label{lem:apx}
\end{lemma}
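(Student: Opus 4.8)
The plan is to sandwich the adversarial length of the optimal $\hat{G}$-tour between two quantities. Write $W(\sigma)=\sum_i w(R_{\sigma_i},R_{\sigma_{i+1}})$ for the weight in $\hat{G}$ of a cyclic ordering $\sigma$, and let $\mathrm{ADV}(\sigma)=\max_{p_i\in R_i}\sum_i d(p_{\sigma_i},p_{\sigma_{i+1}})$ denote the length of the adversarial tour that $\sigma$ induces. Two inequalities drive the argument. First, an easy upper bound: for every ordering $\sigma$ and every adversarial choice $p_i\in R_i$, each edge satisfies $d(p_{\sigma_i},p_{\sigma_{i+1}})\le w(R_{\sigma_i},R_{\sigma_{i+1}})$ by the definition of the weights in $\hat{G}$, so summing around the cycle gives $\mathrm{ADV}(\sigma)\le W(\sigma)$. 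Second, a matching-based lower bound, which is the heart of the lemma: I claim $\mathrm{ADV}(\sigma)\ge \tfrac12 W(\sigma)$.

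To establish the lower bound (for $n$ even), I would view the cyclic ordering $\sigma$ as a cycle $C_n$ on the regions and decompose its edges into two perfect matchings $M_1=\{e_1,e_3,\dots\}$ and $M_2=\{e_2,e_4,\dots\}$ by alternating around the cycle. The key point is that within a single matching no region is an endpoint of two edges, so the adversary can \emph{simultaneously} realize the defining maximum distance on every edge of that matching: for each edge $(R_i,R_j)\in M_k$, place $p_i$ and $p_j$ at the pair attaining $w(R_i,R_j)$, and place the remaining points arbitrarily. Because all distances are nonnegative, this particular choice already gives a tour of length at least $W(M_k)=\sum_{e\in M_k}w(e)$, hence $\mathrm{ADV}(\sigma)\ge W(M_k)$ for $k=1,2$. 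Since $W(M_1)+W(M_2)=W(\sigma)$, the larger of the two matchings has weight at least $\tfrac12 W(\sigma)$, proving the claim.

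Now I would combine the pieces. Applying the lower bound to the optimal adversarial ordering $\sigma_R^*$, whose adversarial length is exactly $|OPT|$, yields $W(\sigma_R^*)\le 2|OPT|$. Letting $\sigma^{\hat{G}}$ be an optimal TSP ordering in $\hat{G}$, its weight satisfies $W(\sigma^{\hat{G}})\le W(\sigma_R^*)$ by optimality of the TSP tour in $\hat{G}$ (note $\sigma_R^*$ is itself a feasible ordering). Chaining these with the upper bound applied to $\sigma^{\hat{G}}$ gives
\[
\mathrm{ADV}(\sigma^{\hat{G}})\ \le\ W(\sigma^{\hat{G}})\ \le\ W(\sigma_R^*)\ \le\ 2|OPT|,
\]
which is exactly the asserted $2$-approximation.

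The step I expect to be the real obstacle is the lower bound when $n$ is odd, since $C_n$ is then not $2$-edge-colorable and cannot be split into two matchings covering all of its edges. The natural fix is to delete a minimum-weight edge $e_{\min}$, alternate-color the resulting even-length path into two matchings, and argue $\mathrm{ADV}(\sigma)\ge \tfrac12\bigl(W(\sigma)-w(e_{\min})\bigr)$; since $w(e_{\min})\le W(\sigma)/n$, this degrades the bound to roughly $\tfrac{2n}{n-1}|OPT|$, accounting for the extra $1/n$ slack. I would confirm this is the intended reading of the lemma (i.e., whether the clean factor $2$ is meant for even $n$, with the odd-$n$ loss absorbed into the later Christofides step that multiplies by $3/2$ to produce the overall $3$-approximation), and I would double-check the subtle point that assigning the matching-attaining endpoints never over-constrains a region — precisely why a \emph{matching}, rather than an arbitrary subset of edges, is essential to the argument.
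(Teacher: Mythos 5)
Your upper bound $\mathrm{ADV}(\sigma)\le W(\sigma)$, your matching decomposition for even $n$, and the chaining $\mathrm{ADV}(\sigma^{\hat G})\le W(\sigma^{\hat G})\le W(\sigma_R^*)\le 2|OPT|$ all coincide with the paper's argument (the paper phrases the even case as two modified cycles $C_1,C_2$ realizing the ``odd'' and ``even'' edges of $\sigma_R^*$, each of length at most $|OPT|$ and jointly of length at least $w_{\sigma_R^*}$, which is your two-matchings argument in disguise). The genuine gap is exactly the step you flagged: the odd case. Your fix --- delete a minimum-weight edge and two-color the remaining path --- only yields $W(\sigma)\le\frac{2n}{n-1}\,\mathrm{ADV}(\sigma)$, i.e.\ a $2+\frac{2}{n-1}$ factor, and this loss is \emph{not} absorbed downstream: the paper's Theorem 2.1 claims a clean $3$-approximation via Christofides, which needs the clean factor $2$ here for all $n$. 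So your reading that the odd-$n$ slack is intended does not match the paper; the lemma is proved exactly as stated, with no $O(1/n)$ loss.

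The missing idea is how the paper recovers the weight of the one uncovered edge, $w(R_n^*,R_1^*)$, without charging it to a third configuration. In $C_1$ (which realizes the odd edges $(R_1^*,R_2^*),\ldots,(R_{n-2}^*,R_{n-1}^*)$, leaving $R_n^*$ unconstrained), the point $p_n$ is set to $a_n\in R_n^*$, an endpoint of a pair $(a_n,a_1)$, $a_1\in R_1^*$, realizing $w(R_n^*,R_1^*)$; this contributes the extra term $d(a_n,p_1)$. In $C_2$ (which realizes the even edges $(R_2^*,R_3^*),\ldots,(R_{n-1}^*,R_n^*)$, leaving $R_1^*$ unconstrained), the point $q_1\in R_1^*$ is chosen to maximize $d(q_n,q_1)+d(q_1,q_2)$, and a short averaging claim shows this maximum is at least $\mathrm{diam}(R_1^*)$: for a diametral pair $u,v\in R_1^*$,
\begin{equation*}
2\bigl(d(q_n,q_1)+d(q_1,q_2)\bigr)\ \ge\ \bigl[d(u,q_n)+d(q_n,v)\bigr]+\bigl[d(u,q_2)+d(q_2,v)\bigr]\ \ge\ 2\,d(u,v).
\end{equation*}
Then the triangle inequality closes the accounting: $d(a_n,p_1)+\mathrm{diam}(R_1^*)\ \ge\ d(a_n,p_1)+d(p_1,a_1)\ \ge\ d(a_n,a_1)=w(R_n^*,R_1^*)$, so $|C_1|+|C_2|\ge w_{\sigma_R^*}$ and hence $w_{\sigma_R^*}\le 2|OPT|$ for odd $n$ as well. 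Note this uses only metric-space properties (your worry about over-constraining regions is handled because $R_n^*$ is free in $C_1$ and $R_1^*$ is free in $C_2$), so the clean factor $2$ holds in full generality; your even-$n$ argument plus this patch would make your proof complete.
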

\begin{proof}
Let $\sigma_R^* = <R_1^*, R_2^*, ... , R_n^*>$ be an optimal (cyclic)
permutation of the regions $R$ for the adversarial TSP on $R$, and let
$p_i^*\in R_i^*$ be the adversary's choice of points corresponding to
$\sigma_R^*$.  Then, $|OPT|=d(p_1^*,p_2^*)+d(p_2^*,p_3^*)+\cdots
+d(p_n^*,p_1^*)$ is the length of the cycle
$C=<p_1^*,p_2^*,\ldots,p_n^*>$, an optimal adversarial TSP solution.

Let $w_{\sigma_R^*}=w(R_1^*,R_2^*)+w(R_2^*,R_3^*)+\cdots +
w(R_n^*,R_1^*)$ be the total weight of the cycle $\sigma_R^*$ in
$\hat{G}$. 
Let $w_{TSP}^*$ be the total weight of a minimum-weight Hamiltonian
cycle, given by (cycle) permutation $\sigma_{TSP}$, in $\hat{G}$;
then, $w_{TSP}^*\leq w_{\sigma_R^*}$.

Our goal is to show that the permutation $\sigma_{TSP}$ yields a 2-approximation for the adversarial TSP on $R$.
Since the length of the adversarial cycle corresponding to $\sigma_{TSP}$ is at most $w_{TSP}$, and since
$w_{TSP}^*\leq w_{\sigma_R^*}$, it suffices to show that $w_{\sigma_R^*}\leq 2|OPT|$.

Consider the cycle $C=<p_1^*,p_2^*,\ldots,p_n^*>$ whose length is
$|OPT|$.  If we modify $C$ by choosing points within each region
$R_i^*$ differently from $p_i^*\in R_i^*$, the length of $C$ can only
go down, since the points $p_i^*$ were chosen adversarially to make
the cycle $C$ as long as possible (for the given permutation
$\sigma_R^*$).
Consider two copies of $C$ (of total length $2|OPT|$); we will modify these
two cycles into two (possibly shorter) cycles, $C_1$ and $C_2$, by making different choices for
the points in each region $R_i^*$.

Consider first the case that $n$ is even.
Then, we define $C_1$ to be the modification of cycle $C$ in which the points are chosen in regions $R_i^*$
in order to maximize the lengths of the ``odd'' edges, $(R_1^*,R_2^*), (R_3^*,R_4^*),\ldots, (R_{n-1}^*,R_n^*)$,
and we define $C_2$ to be the modification of cycle $C$ in which the points are chosen in regions $R_i^*$
in order to maximize the lengths of the ``even'' edges, $(R_2^*,R_3^*), (R_4^*,R_5^*),\ldots, (R_n^*,R_1^*)$.
The cycle $C_1$, then, has length at least
$w(R_1^*,R_2^*)+w(R_3^*,R_4^*)+\cdots+w(R_{n-1}^*,R_n^*)$, the total
weights of the odd edges in the cycle in $\hat{G}$ corresponding to
$\sigma_R^*$.
Similarly, the cycle $C_2$ has length at least
$w(R_2^*,R_3^*)+w(R_4^*,R_5^*)+\cdots+w(R_n^*,R_1^*)$, the total
weights of the even edges in the cycle in $\hat{G}$ corresponding to
$\sigma_R^*$.
Together, then, the lengths of the two cycles $C_1$ and $C_2$ total at
least the weight, $w_{\sigma_R^*}$, of the cycle $\sigma_R^*$ in the
graph $\hat{G}$.  Since each of the weights of $C_1$ and $C_2$ are
at most $|OPT|$ (the weight of $C$), we conclude that
$w_{\sigma_R^*}\leq 2|OPT|$, as claimed.

\short{The case in which $n$ is odd is handled similarly; details appear in the full paper~\cite{full}.}
\full{Consider now the case in which $n$ is odd.
Then, we define $C_1$ to be the modification of cycle $C$ in which the
points $p_i\in R_i^*$ are chosen in regions $R_i^*$ in order to
maximize the lengths of the ``odd'' edges, $(R_1^*,R_2^*),
(R_3^*,R_4^*),\ldots, (R_{n-2}^*,R_{n-1}^*)$; then, point $p_n\in
R_n^*$ is chosen to be $p_n=a_n\in R_n^*$, the endpoint of an edge,
$(a_n,a_1)$, with $a_1\in R_1^*$, that realizes the distance
$w(R_n^*,R_1^*)$ (i.e., $d(p_n,a_1)=d(a_n,a_1)=w(R_n^*,R_1^*)$).
We define $C_2$ to be the modification of cycle $C$ in which the points $q_i\in R_i^*$ are chosen in regions $R_i^*$
in order to maximize the lengths of the ``even'' edges, $(R_2^*,R_3^*), (R_4^*,R_5^*),\ldots, (R_{n-1}^*,R_{n}^*)$;
then, subject to these choices of points $q_n\in R_n^*$ and $q_2\in R_2^*$,
we choose the point $q_1\in R_1^*$ in order to maximize $d(q_n,q_1)+d(q_1,q_2)$.
\begin{claim}
$d(q_n,q_1)+d(q_1,q_2)\geq diam(R_1^*)$, where $diam(R_1^*)$ is the diameter of the set $R_1^*$.
\end{claim}
\begin{proof}
The claim is trivially true if $R_1^*$ has only a single point (and diameter 0).
Thus, assume that $R_1^*$ has at least two points, and let $u,v\in R_1^*$ be a pair of
points that are at maximum distance (i.e., $d(u,v)=diam(R_1^*)$).
By the choice of $q_1\in R_1^*$, we know that $d(q_n,q_1)+d(q_1,q_2)\geq d(q_n,u)+d(u,q_2)$
and that $d(q_n,q_1)+d(q_1,q_2)\geq d(q_n,v)+d(v,q_2)$.  Adding these inequalities we get
$$2(d(q_n,q_1)+d(q_1,q_2))\geq d(q_n,u)+d(u,q_2)+ d(q_n,v)+d(v,q_2)$$
$$ = [d(u,q_n)+d(q_n,v)]+[d(u,q_2)+d(q_2,v)] \geq 2d(u,v).$$
This implies that $d(q_n,q_1)+d(q_1,q_2)\geq d(u,v)=diam(R_1^*)$, as claimed.
\end{proof}
The cycle $C_1$ has length at least
$w(R_1^*,R_2^*)+w(R_3^*,R_4^*)+\cdots+w(R_{n-2}^*,R_{n-1}^*) + d(p_n,p_1)$, the total
weights of the odd edges in the cycle in $\hat{G}$ corresponding to
$\sigma_R^*$, plus the distance $d(p_n,p_1)=d(a_n,p_1)$.
%
Similarly, the cycle $C_2$ has length at least
$w(R_2^*,R_3^*)+w(R_4^*,R_5^*)+\cdots+w(R_{n-1}^*,R_{n}^*)+diam(R_1^*)$, the total
weights of the even edges in the cycle in $\hat{G}$ corresponding to
$\sigma_R^*$, plus the sum of the lengths of the two edges incident on $q_1$ (which was chosen to maximize this sum).
Together, then, the lengths of the two cycles $C_1$ and $C_2$ sum to at least 
$$w(R_1^*,R_2^*)+w(R_2^*,R_3^*)+w(R_3^*,R_4^*)+\cdots+w(R_{n-2}^*,R_{n-1}^*) + w(R_{n-1}^*,R_n^*) + d(p_n,p_1) + diam(R_1^*)$$
$$\geq w(R_1^*,R_2^*)+w(R_2^*,R_3^*)+w(R_3^*,R_4^*)+\cdots+w(R_{n-2}^*,R_{n-1}^*) + w(R_{n-1}^*,R_n^*) + d(a_n,a_1)$$
$$= w(R_1^*,R_2^*)+w(R_2^*,R_3^*)+w(R_3^*,R_4^*)+\cdots+w(R_{n-2}^*,R_{n-1}^*) + w(R_{n-1}^*,R_n^*) + w(R_n^*,R_1^*).$$
Thus, the lengths of $C_1$ and $C_2$ sum to at least 
the weight, $w_{\sigma_R^*}$, of the cycle $\sigma_R^*$ in the
graph $\hat{G}$.  Since, by the adversarial choices of points in $C$, each of the lengths of $C_1$ and $C_2$ are
at most $|OPT|$ (the weight of $C$), we conclude that
$w_{\sigma_R^*}\leq 2|OPT|$, as claimed.
}
\end{proof}


\full{A direct consequence of the above lemma is the following:}

\begin{theorem}
The permutation $\sigma_R$ corresponding to a Christofides $3/2$-approximate TSP tour in $\hat{G}$ yields a $3$-approximation to the adversarial TSP on~$R$.
\end{theorem}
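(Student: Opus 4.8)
The plan is to obtain the theorem as a direct corollary of the analysis already carried out for Lemma~\ref{lem:apx}, simply replacing the exact TSP computation in $\hat{G}$ with a Christofides approximation. The key enabling fact is Lemma~\ref{lem:metric}, which guarantees that the edge-weighted graph $\hat{G}$ is a metric; this is exactly the hypothesis Christofides' algorithm needs, so in polynomial time we can compute a Hamiltonian cycle on the regions whose total $\hat{G}$-weight, call it $w_{Christ}$, satisfies $w_{Christ}\le \frac{3}{2}\,w_{TSP}^*$, where $w_{TSP}^*$ is the weight of a minimum-weight Hamiltonian cycle in $\hat{G}$. Let $\sigma_R$ denote the corresponding permutation of the regions.

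Next I would isolate the two bounds that let $\hat{G}$-weights dominate the adversary. First, for \emph{any} ordering $\sigma$, the length of its adversarial cycle is at most the weight of $\sigma$ in $\hat{G}$: every realized edge length $d(p_i,p_j)$ is bounded by $w(R_i,R_j)=\max_{s\in R_i,\,t\in R_j} d(s,t)$ by the very definition of the weights, so the adversarial sum is at most the sum of the corresponding weights (this is the same observation used inside Lemma~\ref{lem:apx} to bound the adversarial cycle of $\sigma_{TSP}$). Second, the proof of Lemma~\ref{lem:apx} establishes $w_{TSP}^*\le w_{\sigma_R^*}\le 2|OPT|$, where $w_{\sigma_R^*}$ is the $\hat{G}$-weight of the optimal adversarial ordering; the factor of $2$ there comes from the odd/even edge-splitting argument.

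The proof then chains these inequalities. Applying the first bound to $\sigma_R$, the adversarial tour length induced by $\sigma_R$ is at most $w_{Christ}$, and then
\[
w_{Christ}\le \frac{3}{2}\,w_{TSP}^*\le \frac{3}{2}\,w_{\sigma_R^*}\le \frac{3}{2}\cdot 2|OPT| = 3|OPT|.
\]
Hence the adversarial cycle for the Christofides ordering $\sigma_R$ has length at most $3|OPT|$, which is the claimed $3$-approximation.

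I do not anticipate a genuine obstacle; the statement is essentially a composition of two already-proved estimates. The only point requiring care is ensuring that the two losses multiply rather than add: the factor $2$ arises when passing from $|OPT|$ to $\hat{G}$-weights (bounding $w_{\sigma_R^*}$ by $2|OPT|$), while the factor $\frac{3}{2}$ arises from approximating the optimal $\hat{G}$-tour. These compose to $3$ precisely because Christofides competes against $w_{TSP}^*\le w_{\sigma_R^*}$ in $\hat{G}$, not against $|OPT|$ directly, so the $\frac{3}{2}$ is applied on top of the factor-$2$ bound.
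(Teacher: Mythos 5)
Your proposal is correct and follows exactly the route the paper intends: the paper states the theorem as ``a direct consequence'' of Lemma~\ref{lem:apx}, whose proof establishes $w_{TSP}^*\leq w_{\sigma_R^*}\leq 2|OPT|$, and your chain (adversarial length of $\sigma_R$ bounded by its $\hat{G}$-weight, Christofides valid by Lemma~\ref{lem:metric}, then $\frac{3}{2}\cdot 2=3$) is precisely the composition the authors leave implicit. You even make explicit the one point the paper glosses over, namely that the $3/2$ multiplies the $2$ because Christofides competes against $w_{TSP}^*$ in $\hat{G}$ rather than against $|OPT|$.
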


\section{Unit Line Segments of the Same Orientation in the Plane}
In this section, we assume that $R$ consists of a set of $n$
unit-length segments of the same orientation; without loss of
generality, we assume the segments are vertical. We show that the
ordering, $TSP_c$, given by an optimal TSP tour on the segment center
points yields an adversarial tour of length at most $(7/3) |OPT| + 1$;
thus, a PTAS to approximate $TSP_c$ yields an algorithm with
adversarial tour length at most $(7/3 + \epsilon) |OPT| + 1$, for any
fixed~$\epsilon$.

\begin{lemma}
\label{lem:centers}
For the ATSP on a set $R$ of unit vertical segments in the plane, 
$|OPT|\geq TSP_c^*$, where $TSP_c^*$ is the length of an optimal TSP tour on the segment center points.
\end{lemma}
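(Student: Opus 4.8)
The plan is to exploit the minimax structure of the definition of $|OPT|$: the adversarial optimum is obtained by first \emph{minimizing} over orderings and then letting the adversary \emph{maximize} the cycle length over all selections of points $p_i\in R_i$, whereas $TSP_c^*$ minimizes over orderings the length of a single \emph{fixed} (center-point) selection. The one observation that drives everything is that, for a single fixed ordering, the adversary can never do worse than the particular selection that picks the center of each segment; hence the adversarial length of any ordering is at least the length of the center-point cycle for that \emph{same} ordering.

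Concretely, I would first let $\sigma_R^*=\langle R_1^*,\ldots,R_n^*\rangle$ be an optimal adversarial ordering, so that $|OPT|$ equals the length of the cycle on the adversary's points $p_i^*\in R_i^*$ realizing this ordering. Next I would keep the ordering $\sigma_R^*$ but substitute the center (midpoint) $c_i$ of each segment $R_i^*$ in place of $p_i^*$, producing a Hamiltonian cycle $C_c$ on the center points that visits them in the order $\sigma_R^*$. Because the points $p_i^*$ were chosen adversarially to maximize the cycle length for the ordering $\sigma_R^*$, and the centers $c_i$ are merely one admissible choice of points, the length of $C_c$ is at most $|OPT|$.

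Finally, since $C_c$ is itself a Hamiltonian cycle on the set of center points, its length is at least $TSP_c^*$, the minimum length over all tours on those points. Chaining the two bounds yields $TSP_c^*\le \operatorname{length}(C_c)\le |OPT|$, which is exactly the claimed inequality.

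The point to get right here is the \emph{direction} of the two inequalities — the adversary's maximization bounds $\operatorname{length}(C_c)$ from above by $|OPT|$, while the optimality of $TSP_c^*$ bounds it from below — rather than any geometric estimate. In fact this argument uses nothing about the segments being unit length, vertical, or even segments: it needs only that the center is one valid choice of point in each region. So the ``obstacle,'' such as it is, is merely to recognize that no geometry is required for this lower bound and to phrase the sandwich $TSP_c^*\le\operatorname{length}(C_c)\le|OPT|$ correctly.
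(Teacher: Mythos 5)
Your proposal is correct and matches the paper's proof essentially verbatim: the paper also takes the optimal adversarial ordering $\sigma_R^*$, forms the center-point cycle $\gamma_c$ (your $C_c$) in that order, and sandwiches $TSP_c^* \leq |\gamma_c| \leq |OPT|$ by exactly the two inequalities you identify. Your added remark that the argument needs no geometry (only that the center is an admissible choice in each region) is also accurate.
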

\begin{proof}
Consider an ATSP optimal ordering $\sigma_R^*$ of the vertical
segments $R$. The cycle $\gamma_c$ that visits the center points of
segments $R$ in the order $\sigma_R^*$ has length at least $TSP_c^*$,
and the length, $|OPT|$, of an adversarial cycle for $\sigma_R^*$ is
at least the length of~$\gamma_c$.
\end{proof}

\begin{lemma}
 $|OPT| \geq \frac{3}{4}(n-1)$ when $n$ is odd and  $|OPT| \geq \frac{3}{4}n$ when $n$ is even.
 \label{lem:sum-segs}
\end{lemma}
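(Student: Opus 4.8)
The plan is to prove the slightly stronger and more robust statement that for \emph{any} ordering of the segments the adversary can force a tour of length at least $3n/4$ (even $n$) or at least $3(n-1)/4$ (odd $n$); since $|OPT|$ is the adversarial length of the particular ordering $\sigma_R^*$, this gives the lemma. Fix an ordering and list the segments in tour order as $R_{\sigma_1},\ldots,R_{\sigma_n}$ with cyclic indices, writing the center of $R_{\sigma_k}$ as $(x_k,y_k)$, so that a visitation choice is just a sign $\epsilon_k\in\{+\tfrac12,-\tfrac12\}$ placing the point at height $y_k+\epsilon_k$. The only property of edge lengths I would use is that the edge between consecutive chosen points is at least its vertical extent, $\sqrt{(x_k-x_{k+1})^2+(y_k+\epsilon_k-y_{k+1}-\epsilon_{k+1})^2}\geq |(y_k-y_{k+1})+(\epsilon_k-\epsilon_{k+1})|$. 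Dropping the horizontal component only weakens the estimate, which is precisely what makes the final bound independent of the segment positions.

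Rather than exhibiting one good assignment by hand, I would apply the probabilistic method. Group the tour-consecutive segments into the $\lfloor n/2\rfloor$ pairs $(1,2),(3,4),\ldots$; for each pair independently, assign the two signs $(+\tfrac12,-\tfrac12)$ or $(-\tfrac12,+\tfrac12)$ with probability $\tfrac12$ each (for odd $n$, the single leftover segment gets an independent uniform sign). This distribution has two features I would exploit: within a pair the two signs are always opposite, so the \emph{intra-pair} edge sees an offset $\epsilon_k-\epsilon_{k+1}=\pm1$; and any two segments in different pairs (or involving the leftover one) have independent, marginally uniform signs, so every \emph{inter-pair} edge sees an offset in $\{-1,0,+1\}$ with probabilities $\tfrac14,\tfrac12,\tfrac14$.

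I would then bound the two edge types by linearity of expectation. For an intra-pair edge with center gap $\Delta y$, the expected length is at least $\tfrac12|\Delta y+1|+\tfrac12|\Delta y-1|\geq 1$ (triangle inequality on two terms whose arguments differ by $2$); for an inter-pair edge it is at least $\tfrac14|\Delta y+1|+\tfrac12|\Delta y|+\tfrac14|\Delta y-1|\geq\tfrac12$. When $n$ is even there are $n/2$ intra-pair and $n/2$ inter-pair edges, so the expected tour length is at least $\tfrac{n}{2}\cdot 1+\tfrac{n}{2}\cdot\tfrac12=\tfrac{3n}{4}$; when $n$ is odd there are $(n-1)/2$ intra-pair edges and $(n+1)/2$ remaining edges, giving at least $\tfrac{n-1}{2}+\tfrac{n+1}{4}=\tfrac{3n-1}{4}\geq\tfrac{3(n-1)}{4}$. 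Since the maximum over sign assignments is at least this expectation, some assignment attains a tour of at least this length, and $|OPT|$ is at least that, as claimed.

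I expect the main obstacle to be organizational rather than computational: getting the pairing and parity bookkeeping exactly right (the count of inter-pair edges, and for odd $n$ the two edges incident to the unpaired segment), and stating the independence-with-uniform-marginals property carefully enough to justify the $(\tfrac14,\tfrac12,\tfrac14)$ offset distribution on \emph{every} inter-pair edge. The per-edge expectation estimates themselves are one-line triangle-inequality bounds, and the reduction to the vertical extent is where the positions of the segments drop out.
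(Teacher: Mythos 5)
Your proof is correct, and it takes a genuinely different route from the paper's. The paper argues by induction on $n$: it adds two segments at a time to an adversarial path, reduces WLOG to the vertically collinear case (the analogue of your dropping the horizontal component), and then solves an explicit min--max over three candidate adversarial routes $\{3-x-y,\; x+y,\; x+1-y\}$ to show the adversary extracts at least $3/2$ per pair of consecutive segments. You reach the same $3/2$-per-pair constant non-constructively, via the probabilistic method: antipodal random signs within each tour-consecutive pair, independent across pairs, then linearity of expectation with the per-edge bounds $\tfrac12|\Delta y+1|+\tfrac12|\Delta y-1|\geq 1$ (intra-pair) and $\tfrac14|\Delta y+1|+\tfrac12|\Delta y|+\tfrac14|\Delta y-1|\geq\tfrac12$ (inter-pair). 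Your version buys several things: it avoids the induction and its somewhat delicate bookkeeping (the paper's ``vertically collinear'' reduction and the hand analysis of adversary routes), it manifestly holds for \emph{every} ordering rather than just $\sigma_R^*$, it works directly on the cycle instead of reducing to paths, and for odd $n$ it yields the slightly stronger bound $\tfrac{3n-1}{4}\geq\tfrac{3}{4}(n-1)$. The paper's approach, in exchange, exhibits an explicit adversary strategy pair by pair. Two trivial boundary remarks, neither a gap: for $n=2$ the cycle's second edge joins the same pair, so it is intra-pair with expectation $\geq 1\geq\tfrac12$ and the count still goes through; and your two edges incident to the leftover segment (odd $n$) share that segment's sign, but linearity of expectation needs only each edge's marginal offset distribution, which you correctly note is $(\tfrac14,\tfrac12,\tfrac14)$ by independence of the pairing blocks. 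Restricting the adversary to segment endpoints needs no justification since any distribution over feasible choices lower-bounds the adversarial maximum.
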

\begin{proof}
It suffices to show the claim for ATSP paths; an ATSP cycle is at
least as long.  The proof is by induction on $n=|R|$.  First, suppose
that $n$ is odd. The base case is trivially true.  Assume that the
claim holds for $n \leq k$, for $k$ odd.  Next, consider an instance
$S'$ with $k + 2$ segments.  We know that for the first $k$ segments
in an optimal permutation for $S'$, that $|OPT| \geq \frac{3}{4}(k -
1)$. Next, we show that regardless of the placement of the next two
unit segments in $S'$, $s_{k+1}$ and $s_{k+2}$, an adversary can make
us pay at least 3/2 units for every independent pair of consecutive
segments in $\sigma_S^*$. We can assume that (vertical) segments
$s_{k+1}$ and $s_{k+2}$ are vertically collinear. Next we assume,
without loss of generality, that $s_{k+2}$ is above $s_{k+1}$. Let $a$
be the point on $s_{k }$ that the adversary chose; refer to
Figure~\ref{fig:segs}. Let $b_1$ (resp., $c_1$) be the top endpoint of
$s_{k+1}$ (resp., $s_{k+2}$). Let $b_2$ (resp. $c_2$) be the bottom
endpoint of $s_{k+1}$ (resp., $s_{k+2}$). Now, let $|b_1a| = x$ and
$|c_2b_1| = y$. This implies that $|ab_2| = 1-x$, $|c_2b_2| = 1- y$
and $|c_1b_1| = 1-y$. The three candidate routes for the adversary to
take are $(a,b_2,c_1)$ or $(a,b_1,c_2)$ or $(a, b_1, c_1)$. These
paths have lengths $3-x-y$, $x+y$, $x+1-y$, respectively. Thus, we
solve $min-max_{x,y} \{\{3-x-y, x+y, x+1-y\} : 0\leq x \leq 1, 0 \leq
y \leq 1\}$ to find the minimum possible length of the adversarial route;
the solution is 3/2. Thus, the adversary can make us pay 3/2 for each
pair of segments; thus, $|OPT| \geq \frac{3}{4}(n-1)$ for $n$ odd.
 
In the case that $n$ is even, the adversary can make us pay at least
3/2 between every consecutive pair of segments in the optimal
ordering; thus, $|OPT| \geq \frac{3}{4}n$.
\end{proof}

\begin{figure}
\centering
\includegraphics[]{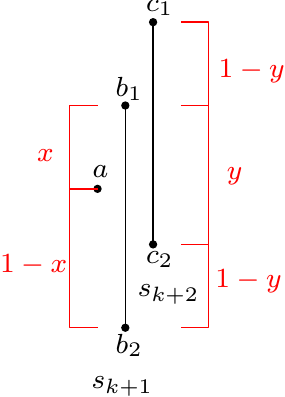}
\caption{Illustration of the induction step in the proof of Lemma~\ref{lem:sum-segs}.}
  \label{fig:segs}
\end{figure}

\begin{theorem}
For the ATSP on a set $R$ of unit-length vertical segments, the
ordering given by an optimal TSP on the segment center points yields an
adversarial tour of length at most $(7/3)|OPT|+1$.  Thus, a PTAS for
the TSP on center points yields an approximation algorithm for ATSP, 
with tour length at most $(7/3 + \epsilon) |OPT| + 1$.
\end{theorem}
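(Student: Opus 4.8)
The plan is to bound the adversarial tour length of the center-point ordering by combining the two lower bounds on $|OPT|$ that immediately precede the theorem (Lemma~\ref{lem:centers} and Lemma~\ref{lem:sum-segs}) with a single new, elementary estimate that compares the adversarial cycle to the center cycle. Let $\sigma_c$ denote the ordering produced by an optimal TSP tour on the $n$ segment center points, and let $A_c$ be the length of the adversarial cycle for $\sigma_c$. The crux of the argument is the inequality $A_c \le TSP_c^* + n$. To see this, I would note that each adversarial visitation point $p_i$ lies on its unit segment and is therefore within distance $1/2$ of the corresponding center $c_i$. Hence for any edge $(p_i,p_j)$ of the adversarial cycle the triangle inequality gives $d(p_i,p_j)\le d(p_i,c_i)+d(c_i,c_j)+d(c_j,p_j)\le d(c_i,c_j)+1$. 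Summing over the $n$ edges of the cycle, and using that the center edges in the order $\sigma_c$ total exactly $TSP_c^*$, yields $A_c \le TSP_c^* + n$.

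With this estimate in hand, the remaining steps are purely arithmetic. By Lemma~\ref{lem:centers}, $TSP_c^*\le |OPT|$. By Lemma~\ref{lem:sum-segs}, when $n$ is odd we have $|OPT|\ge \tfrac34(n-1)$, i.e.\ $n\le \tfrac43|OPT|+1$, and when $n$ is even we have $|OPT|\ge \tfrac34 n$, i.e.\ $n\le \tfrac43|OPT|$. Substituting into $A_c \le TSP_c^* + n$ gives, in the (worst) odd case,
\[
A_c \;\le\; |OPT| + \tfrac43|OPT| + 1 \;=\; \tfrac73|OPT| + 1,
\]
and the strictly better bound $\tfrac73|OPT|$ when $n$ is even. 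This establishes the claim for the exact optimal center ordering.

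For the algorithmic statement, I would invoke the fact that the TSP on the center points is an instance of Euclidean TSP, which admits a PTAS. Running such a PTAS produces an ordering $\sigma$ whose center-cycle length $L$ satisfies $L\le (1+\epsilon)TSP_c^* \le (1+\epsilon)|OPT|$. The comparison estimate applies verbatim to any ordering, so the adversarial cycle for $\sigma$ has length at most $L + n \le (1+\epsilon)|OPT| + \tfrac43|OPT| + 1 = (\tfrac73+\epsilon)|OPT| + 1$, as desired.

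The argument is short because the two preceding lemmas do the heavy lifting; the only genuinely new ingredient is the $A_c \le TSP_c^* + n$ step, where the unit-length (hence radius-$1/2$) geometry is exactly what converts the additive slack into $+n$ rather than $+2n$. The one point to handle with care is the PTAS composition: I would make sure the $(1+\epsilon)$ factor multiplies only the $TSP_c^*$ contribution (bounded by $|OPT|$) and not the additive $n$ term, so that the approximation error enters as a clean $+\epsilon|OPT|$ and the final bound remains $(\tfrac73+\epsilon)|OPT|+1$.
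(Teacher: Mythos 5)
Your proposal is correct and matches the paper's proof essentially step for step: the same key estimate (adversarial tour $\le$ center tour $+\,n$, using that each visitation point is within $1/2$ of its segment's center), combined with Lemma~\ref{lem:centers} and Lemma~\ref{lem:sum-segs} exactly as the paper does. If anything, your handling of the PTAS composition is slightly more careful than the paper's write-up, which states the intermediate inequality with $|TSP_c|$ rather than the $(1+\epsilon)$-approximate center-tour length, whereas you correctly ensure the $\epsilon$ multiplies only the $TSP_c^* \le |OPT|$ term.
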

\begin{proof}
Let $APX_c$ be the ordering in which the segments are visited by a
$(1 + \epsilon)$-approximate TSP tour on their center points, and let
$|APX_c|$ be the cost of the resulting adversarial tour for this
ordering. 
We know that $|APX_c| \leq |TSP_c| + n$, where $|TSP_c|$ is the length
of an optimal TSP on center points, since a tour on the center points
can be made to detour to either endpoint and back, for each segment,
at a total increase in length of $n$.
Since $|TSP_c| \leq |OPT|$ (by Lemma~\ref{lem:centers}) and $n \leq
4/3|OPT| + 1$ (by Lemma~\ref{lem:sum-segs}), we have that $|APX| \leq
(7/3 + \epsilon) |OPT| + 1$.
\end{proof}

\section{PTAS for Disjoint Unit Disks in the Plane}

In this section we give a PTAS for the adversarial TSP problem when the regions $R=\{d_1,\ldots,d_n\}$ are $n$ disjoint unit-diameter disks in the plane.  We employ the $m$-guillotine  method~\cite{mitchell1999guillotine}, which has been applied to give approximation schemes for a wide variety of geometric network optimization problems, including the Euclidean TSP and the  {\em TSP with Neighborhoods} (TSPN) when the regions are disjoint disks or fat regions in the plane~\cite{dumitrescu2001approximation,mitchell2007ptas}.  

The challenge in applying known PTAS techniques is being able to handle the adversarial nature of the tour.  For the TSPN problem, one computes (using dynamic programming) a shortest connected $m$-guillotine, Eulerian, spanning subgraph of the regions; a tour visiting each region can then be extracted from this network.  A structure lemma shows that an optimal TSPN solution can be converted to an $m$-guillotine solution whose weight is at most $(1 + \epsilon)|OPT|$.  Since $m$-guillotine networks have a recursive structure, we can apply dynamic programming in order to find the cheapest such structure over the input. Then, by extracting a tour from the optimal $m$-guillotine network, we obtain a permutation of the input disks, as well as a {\em particular} point within each region that the tour visits.

For the ATSP problem, we require new ideas and a new structure theorem to account for the fact that our algorithm must search for a permutation of the input disks that is good with respect to an adversarial path through the ordered disks.  We seek to optimize a network that has a recursive structure (to allow dynamic programming to be applied) and that yields an ordering of the disks so that the length of the adversary's tour is ``very close'' to optimal among all possible permutations.  We do this by searching for a shortest (embedded) network having an $m$-guillotine structure that has additional properties that guarantee that the adversary's path through the sequence of regions we compute is not much longer than that of the network we compute.  To accomplish this, we will require several structural results about an optimal solution to ATSP.

\begin{lemma}
\label{lem:adversary-path}
Given any ordering $\sigma_R$ of the input disks $R$, the adversarial path/cycle associated with $\sigma_R$ is a polygonal path/cycle whose vertices lie on the boundaries of the disks $d_i\in R$.
\end{lemma}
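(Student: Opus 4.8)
The plan is to recognize that this statement is the disk specialization, together with a path extension, of Lemma~\ref{lem:vertices}. First I would dispose of the polygonal claim: once the adversary has fixed a representative point $p_i$ in each disk $d_i$, the associated path (resp.\ cycle) connects consecutive chosen points $p_{\sigma_i}$ and $p_{\sigma_{i+1}}$ by their Euclidean geodesic, which in the plane is exactly the straight segment between them. Hence the adversarial structure is polygonal by construction (it may self-intersect), and all that remains is to argue that each vertex $p_{\sigma_i}$ lies on the boundary circle $\partial d_{\sigma_i}$.

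For the boundary-membership claim I would use convexity of the Euclidean distance function. Fix a globally length-maximizing adversarial choice of points, and isolate a single vertex $p_{\sigma_i}$ while holding all other chosen points fixed. The contribution of this vertex to the total length is $f(p)=d(p_{\sigma_{i-1}},p)+d(p,p_{\sigma_{i+1}})$ for an interior vertex (of a path or anywhere on a cycle), or $f(p)=d(p,p_{\sigma_{i\pm 1}})$ for one of the two endpoints of a path, which has a single neighbor. In every case $f$ is a sum of Euclidean norm functions of $p$ and is therefore convex on $d_{\sigma_i}$. Since a convex function on a compact convex set attains its maximum at an extreme point of that set, and since the extreme points of a disk are precisely the points of its bounding circle, the conditional maximizer $p_{\sigma_i}$ must lie on $\partial d_{\sigma_i}$. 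At a global adversarial maximum every vertex is simultaneously a conditional maximizer (with the remaining vertices held fixed), so the per-vertex conclusion applies to all vertices at once, giving the claim.

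For the cycle case this is literally Lemma~\ref{lem:vertices} applied to disks: the ellipse/supporting-line argument there shows each $p_{\sigma_i}$ is an extreme point of its region, and for a disk ``extreme point'' specializes exactly to ``boundary point.'' The only genuinely new ingredient is the treatment of the two endpoints in the path version, where the objective degenerates to a single distance $d(p,p_{\sigma_{i\pm 1}})$; this is again convex, and its maximum over the disk is the boundary point farthest from the neighbor, so endpoints lie on disk boundaries as well.

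I do not expect a serious obstacle here, since the result is a direct corollary of convexity of Euclidean distance together with the identification of a disk's extreme-point set with its boundary. The only point requiring mild care is the passage from coordinatewise (one-vertex-at-a-time) local optimality to the global adversarial optimum; this is handled by the observation just noted, that a global maximizer is conditionally optimal in each coordinate. It is also worth verifying that $f$ is not constant on any two-dimensional neighborhood inside a unit disk, so that the maximum is attained only on the boundary (and not spuriously in the interior); this holds because neither a single Euclidean distance nor a sum of two is constant on a planar region of positive area, ensuring that every length-maximizing choice genuinely places the vertex on $\partial d_{\sigma_i}$.
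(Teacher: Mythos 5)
Your proof is correct, but it takes a genuinely different route from the paper's. The paper proves Lemma~\ref{lem:adversary-path} by a first-order perturbation argument: if a visitation point $p_i$ were interior to $d_i$, moving it a nonzero amount along the bisector of its two incident edges, in the direction opposite the convex cone they span, strictly lengthens \emph{both} edges, immediately contradicting the adversary's optimality (the collinear case is handled by a perpendicular move). You instead argue via convexity: the one-vertex objective $f(p)=d(p_{\sigma_{i-1}},p)+d(p,p_{\sigma_{i+1}})$ (or a single distance at a path endpoint) is convex, a convex function on a compact convex set attains its maximum at an extreme point, and the extreme points of a disk are exactly its bounding circle. It is worth noting what each approach buys. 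The perturbation argument rules out interior maximizers outright, since every interior placement is strictly improvable; the bare extreme-point fact, by contrast, only yields the \emph{existence} of a boundary maximizer, so your route genuinely needs the supplementary step you supply at the end --- that $f$ is nonconstant on any two-dimensional neighborhood, combined with the standard fact that a convex function attaining its maximum at an interior point of its domain is locally constant there --- to conclude that \emph{every} adversarial choice places $p_{\sigma_i}$ on $\partial d_{\sigma_i}$. You correctly flag this subtlety, and you also treat the two endpoints of the path version explicitly, which the paper's proof leaves implicit (its phrasing refers to ``the two edges incident on $p_i$''). Finally, your framing makes transparent the connection to Lemma~\ref{lem:vertices}: ``maximum of a convex function at an extreme point'' is precisely the abstract content of the ellipse/supporting-line argument there, specialized to disks, and your version generalizes with no extra work to arbitrary convex regions and to higher dimensions, whereas the paper's bisector argument is tailored to the planar disk setting.
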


\begin{proof}
The adversary selects exactly one visitation point $p_i\in d_i$ within each disk $d_i\in R$ in order to maximize the length of the path/cycle associated with the order $\sigma_R$.  Between two consecutive (in the order $\sigma_R$) visitation points, the triangle inequality implies that the adversarial path/cycle is a straight segment (i.e., the path/cycle is polygonal).  If the adversary had chosen a visitation point $p_i$ to be interior to $d_i$, then we get a contradiction to the fact that the adversary chose visitation points to maximize the length of the associated path/cycle:  An interior point $p_i$ could be moved a nonzero amount within $d_i$ in such a way that the two edges of the path/cycle incident on $p_i$ both increase in length, e.g., by moving $p_i$ along the angle bisector of the two incident edges, in the direction opposite to the convex cone that the edges define.
\end{proof}

\subsection{Discretization and a Structural Theorem}

In order to make our problem and our algorithm discrete, for a fixed
integer $m=O(1/\epsilon)$, we place $m$ {\em sample points} evenly
spaced around the boundaries of each of the $n$ disks $d_i\in R$.  Let
$\mathcal{G}$ be the set of all $nm$ sample points.  Let
$E_{\mathcal{G}}$ denote the set of edges (line segments) between two
sample points of $\mathcal{G}$ that lie on the boundaries of different
disks of $R$.  The following lemma shows that for any adversarial
(polygonal) tour $T$ associated with $\sigma_R$ there is a polygonal
tour $T'$ visiting the sequence $\sigma_R$ whose vertices are among
the sample points $\mathcal{G}$ and whose length is at least $(1 -
O(1/m))|T|$.
%

\begin{lemma}
\label{lem:sample-points}
Given an adversarial (polygonal) path/cycle, $T$, associated with a
sequence $\sigma_R$ of input disks, there is a polygonal path/cycle
$T'$ that visits sample points $\mathcal{G}$, exactly one per disk, in
the order $\sigma_R$, such that $|T| \le (1 + O(1/m))|T'|$.
\end{lemma}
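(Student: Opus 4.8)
The plan is to construct $T'$ by snapping each vertex of the adversarial tour $T$ to a nearby sample point on the same disk, and then to argue that this snapping costs only a $(1+O(1/m))$ factor. By Lemma~\ref{lem:adversary-path}, $T$ is a polygonal path/cycle whose $i$-th vertex $p_i$ lies on the boundary of disk $d_i$, in the order $\sigma_R$. For each $i$, I would let $p_i'\in\mathcal{G}$ be a sample point on $\partial d_i$ nearest to $p_i$; this defines a tour $T'$ that visits exactly one sample point per disk in the order $\sigma_R$. Since each disk has unit diameter (radius $1/2$, hence boundary length $\pi$) and its $m$ sample points are evenly spaced, consecutive samples are separated by arc length $\pi/m$, so the chord distance satisfies $|p_i-p_i'|\le \pi/(2m)=O(1/m)$.

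First I would bound each edge. By the triangle inequality, the edge $(p_i',p_{i+1}')$ of $T'$ satisfies $|p_i'-p_{i+1}'|\ge |p_i-p_{i+1}|-|p_i-p_i'|-|p_{i+1}-p_{i+1}'|\ge |p_i-p_{i+1}|-\pi/m$. Summing over the (at most $n$) edges gives the additive bound $|T'|\ge |T|-\pi n/m$.

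The main step is to convert this additive error into a multiplicative one, which requires a lower bound $|T|=\Omega(n)$. Here I would exploit that $T$ is the adversary's \emph{maximum} tour: for the maximizing choice $(p_1,\dots,p_n)$, each $p_i$ is a best response to its two neighbors, so $|p_{i-1}-p_i|+|p_i-p_{i+1}|=\max_{q\in d_i}\big(|p_{i-1}-q|+|q-p_{i+1}|\big)\ge diam(d_i)=1$. The inequality follows exactly as in the Claim within the proof of Lemma~\ref{lem:apx}, by taking $u,v$ to be a diametral pair of $d_i$ and averaging the two candidate values. Summing over all $i$ double-counts each edge of $T$ (the path case loses only the two endpoint terms), so $2|T|\ge n$, i.e.\ $|T|\ge n/2$.

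Combining the two bounds, $n\le 2|T|$ yields $|T'|\ge |T|-\frac{\pi n}{m}\ge \big(1-\frac{2\pi}{m}\big)|T|$, and for $m$ larger than some absolute constant this rearranges to $|T|\le (1+O(1/m))|T'|$, as claimed. I expect the only real subtlety to be the $\Omega(n)$ lower bound: the snapping-plus-triangle-inequality step is routine, but without a length lower bound linear in $n$ the additive loss $\pi n/m$ could dwarf $|T|$, so the crux is recognizing that disjointness of the disks together with the adversary's freedom to reach a diameter at every disk forces $|T|$ to grow linearly in $n$.
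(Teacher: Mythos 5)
Your proposal is correct, and its first half (snapping each vertex of $T$ to the nearest of the $m$ evenly spaced sample points, losing at most $\pi/m$ per edge and hence $\pi n/m$ in total) is exactly the paper's rounding step. Where you genuinely diverge is in the crux you correctly identified: turning the additive loss $O(n/m)$ into a multiplicative one via a lower bound linear in $n$. The paper lower-bounds $|T'|$ by an \emph{area/packing} argument borrowed from \cite{dumitrescu2001approximation}: the region swept by a unit-radius disk centered on $T'$ has area at most $2|T'|+\pi$, yet must contain all $n$ disjoint disks of area $\pi/4$ each, giving $n\leq (8/\pi)|T'|+4$. You instead lower-bound $|T|$ directly by exploiting adversariality: at the adversary's maximum, each $p_i$ is a best response to its neighbors, so $d(p_{i-1},p_i)+d(p_i,p_{i+1})\geq diam(d_i)=1$ by the diametral-pair averaging argument (the same Claim used inside Lemma~\ref{lem:apx}), and summing gives $2|T|\geq n$ (minus two endpoint terms in the path case). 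Both routes are sound, and the comparison is instructive: your argument never uses disjointness of the disks --- only that each region has diameter bounded below and that $T$ is adversarially longest --- so it would survive for overlapping unit-diameter regions, where the packing argument fails; conversely, the paper's argument never uses adversariality, so it bounds $n$ by the length of \emph{any} tour meeting the disks (which is why the paper can apply it to the rounded, no-longer-adversarial tour $T'$, and why the same packing fact is reusable elsewhere in the PTAS machinery), whereas your bound is specific to the adversarial tour. One small point you wave at but should make explicit: the best-response identity requires that the adversary's maximum is attained, which holds by compactness of the disks and continuity of the tour length; with that noted, your derivation $|T'|\geq |T|-\pi n/m\geq \left(1-\frac{2\pi}{m}\right)|T|$ and the rearrangement for $m$ beyond a fixed constant are complete (in the path case the two lost endpoint terms contribute only an additive $O(1/m)$, absorbed since $|T|$ is bounded below by a constant once $n\geq 2$).
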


\begin{proof}
We let $T'$ be the path/cycle obtained from $T$ by rounding each of its vertices to the closest sample point of the associated (unit-diameter) disk. This rounding results in each edge decreasing in length by at most $2\cdot \frac{\pi}{m}$, since the sample points are spaced on the disk boundary at distance (along the boundary) of $\frac{2\pi(1/2)}{m}$.  Thus, $|T| \le |T'| + \frac{2\pi}{m}n$.  
We obtain a lower bound on $|T'|$, in terms of $n$, using an area argument (as done in \cite{dumitrescu2001approximation}, but included here for completeness).
Let $A(T')$ be the area swept by a disk of radius 1 whose center traverses $T'$; it is well known that the area swept by a disk of radius $\delta$ whose center moves on a curve of length $\lambda$ is at most $2\delta\lambda+\pi \delta^2$, implying that $A(T')\leq 2|T'|+\pi$. Since $T'$ meets all $n$ of the unit-diameter disks $d_i$, we know that $A(T')\geq n\cdot \pi(1/2)^2$. Thus, $n\leq (8/\pi)|T'| + 4\leq O(|T'|)$ (assuming that $|T'|\geq c$, for some constant $c$, which holds if $n\geq 2$). Since $n\leq O(|T'|)$, the inequality $|T| \le |T'| + \frac{2\pi}{m}n$ implies that $|T| \le (1+O(1/m))|T'|$, as desired. 
\end{proof}

A corollary of Lemma~\ref{lem:sample-points} is that, for purposes of obtaining a PTAS, it suffices to search for an optimal adversarial tour in the discrete graph of edges $E_{\mathcal{G}}$ on sample points.

For two consecutive disks, $d_i$ and $d_{i+1}$ in an ordering
$\sigma_R$, we refer to the convex hull of $d_i$ and $d_{i+1}$ as the
{\em fat edge} associated with $(d_i,d_{i+1})$.  The collection of
such fat edges will be called the {\em convex hull tour} associated
with $\sigma_R$.
  
\begin{theorem}
No point $p\in \mathbb{R}^2$ in the plane lies within more than a constant
number of fat edges of the convex hull tour, $OPT$, associated with an
optimal ordering $\sigma_R^*$.
\label{thm:depth}
\end{theorem}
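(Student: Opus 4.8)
The plan is to show that the \emph{depth} of the fat-edge cover at any point $p$ is $O(1)$. Since the fat edge of a consecutive pair $(d_i,d_{i+1})$ is the convex hull of two unit-diameter disks, i.e.\ the Minkowski sum of the segment joining their centers $c_i,c_{i+1}$ with a disk of radius $1/2$, a point $p$ lies in this fat edge exactly when $p$ is within distance $1/2$ of that \emph{center-segment} $s_i=\overline{c_ic_{i+1}}$. Thus it suffices to bound $N(p)$, the number of center-segments of the optimal ordering $\sigma_R^*$ that pass within distance $1/2$ of $p$.

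The engine of the argument is an exchange principle that lets us use optimality of $\sigma_R^*$ against the adversary. Suppose we modify $\sigma_R^*$ into another cyclic ordering $\sigma'$ by a reconnection that deletes a set $D$ of tour edges and inserts a set $I$ of new ones (with $|I|=|D|$, since every Hamiltonian cycle has $n$ edges). Let $P'$ be the adversary's optimal point-choice for $\sigma'$. Using $P'$ as a (suboptimal) choice for $\sigma_R^*$ changes only the edges in $D\cup I$, so the adversarial cost of $\sigma'$ equals $\mathrm{len}(\sigma_R^*,P')+\Delta\le |OPT|+\Delta$, where $\Delta$ is the length of the inserted edges minus that of the deleted edges, measured at $P'$. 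Because each chosen point lies on the boundary of a unit-diameter disk (Lemma~\ref{lem:adversary-path}), every edge length differs from the corresponding center-to-center distance by at most $1$; hence $\Delta\le -(\text{center-length saving})+2|D|$. Therefore, if a reconnection of $\sigma_R^*$ saves more than $2|D|$ in total center-length, then $\Delta<0$ and $\sigma'$ is strictly cheaper than $OPT$, a contradiction. In short: no reconnection may reduce center-length by more than twice the number of edges it touches.

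First I would dispose of the \emph{near} center-segments, those both of whose endpoint disks have centers within a constant distance $C$ of $p$. The disjoint unit-diameter disks with centers in $B(p,C)$ number at most $O(C^2)$ by area packing, and each disk is incident to only two tour edges, so there are only $O(C^2)=O(1)$ near segments through $p$. It remains to bound the \emph{far} segments, those having an endpoint disk with center at distance more than $C$ from $p$; each such segment has length at least $C-\tfrac12$. I would group the far segments by the direction of their supporting line into $O(1)$ angular sectors of small fixed width $\gamma$. Within one sector the far segments are pairwise nearly parallel and, since each passes within $1/2$ of $p$, they all cross a common window $W$ of length $O(1)$ centered at $p$ and (nearly) perpendicular to the sector direction.

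The heart of the proof, and the main obstacle, is bounding the number of far segments in a single sector, because such nested, nearly parallel tour edges are \emph{$2$-opt-optimal}: uncrossing a single pair need not help, so no single swap exposes their redundancy. Instead I would reroute the whole family at once by a patching/monotone move: delete the $N$ family edges (breaking the cycle into paths), reconnect the endpoints on each side of $W$ in monotone order along the sector direction, and re-cross $W$ only twice. Because the endpoints on the far side lie along a strip of width $O(1)$ and the disks are disjoint, the monotone connection on each side has length $O(C+N)$, whereas the deleted family edges had total length $\Omega(NC)$; the net center-length saving is $\Omega(NC)-O(C+N)$, which exceeds the exchange slack $2N$ once $N$ is larger than a fixed constant (with $C$ chosen a suitable constant). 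By the exchange principle this contradicts optimality of $\sigma_R^*$, so each sector contains $O(1)$ far segments. Summing the $O(1)$ near segments and the $O(1)$ far segments over the $O(1)$ sectors gives $N(p)=O(1)$, as claimed. The two technical points to verify carefully are that the patching reconnection yields a single Hamiltonian cycle (standard) and that the saving genuinely dominates the adversary's per-edge perturbation budget, which is exactly what the $\Delta$-bound above provides.
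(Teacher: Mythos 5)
Your exchange principle is sound, and it is essentially the accounting the paper also uses (the paper budgets $8$ unit diameters for the adversary's re-optimization after a reconnection and engineers each local move to save $9$); the near/far split by disk packing is likewise fine. The gap is in the one step that carries all the weight: the claim that deleting all $N$ far segments of a sector and reconnecting each side monotonically costs only $O(C+N)$. That bound fails when the family mixes length scales. The reconnecting path on a side must reach the farthest deleted endpoint, costing $\Theta(r_{\max})$, and the lateral offset of an endpoint at axial distance $r$ can be as large as $\tfrac12+r\sin\gamma$, so the wiggle cost of a fresh straight reconnection can reach $\Theta(\gamma^2 M)$ on a jump of axial extent $M$. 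Concretely, take one segment of length $M\gg C$ through $p$ together with $N-1$ segments of length $\Theta(C)$, all in one sector: the deleted center-length is $M+(N-1)C$, but the reconnection costs $M(1+\Omega(\gamma^2))+O(N)$, so the net saving is about $(N-1)C-\Omega(\gamma^2 M)-O(N)$, which is negative once $M$ is large --- no contradiction is reached. Restricting families to a single dyadic length class repairs the computation, but then you only get $O(1)$ fat edges per scale per sector, i.e.\ an $O(\log(\text{spread}))$ depth bound, not $O(1)$. Your parenthetical ``yields a single Hamiltonian cycle (standard)'' also hides a real issue: each endpoint must receive new degree exactly one, so you must match endpoints in pairs rather than string them on a path, and whether the union of the $N$ leftover paths with your matching is a single cycle depends on the directions in which the tour traverses the deleted edges.

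That orientation question is not a technicality --- it is the key to the scale-free argument you are missing, and it is what the paper's proof of Theorem~\ref{thm:depth} exploits. For two nearly parallel chords traversed in \emph{opposite} directions, the only cycle-preserving 2-opt inserts another long crossing edge (no saving) --- this is the ``2-opt-optimal'' phenomenon that pushed you to whole-bundle rerouting --- but among any three parallel chords through the window, two are traversed in the \emph{same} direction (pigeonhole), and for such a pair the cycle-preserving 2-opt joins left endpoints to left and right to right. The paper performs exactly this surgery locally: it fixes a disk $D_p$ of constant radius $K$ about $p$, deletes only the near-diametral \emph{portions} of two crossing edges inside $D_p$ (saving at least $4\sqrt{K^2-1}$), reconnects along arcs of the boundary of $D_p$ (paying at most $\pi K$, net $\ge 9$ for $K\ge 11$), handles the nearly-parallel case via the triple/quadrant pigeonhole just described, and bounds each new tour edge by the \emph{concatenation} of the kept exterior portions, the arcs, and $O(1)$ diameters via the triangle inequality. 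Because the long exterior parts of the edges are reused rather than replaced by fresh straight lines, no $\gamma^2 M$ lateral penalty ever appears, and the saving per move is a constant independent of the edges' global lengths --- which is precisely the property your global rerouting lacks and the reason your approach, as written, cannot get below $O(\log)$ depth.
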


\begin{proof}
Consider an arbitrary point $p \in \mathbb{R}^2$ and consider its
intersection with the convex hull tour of $OPT$.  Center a disk $D_p$,
of radius $K$ centered at $p$, with $K = O(1)$ a constant to be
determined later.  Since the disks $d_i$ are disjoint, there are only
a constant number ($O(K^2)$) that intersect $D_p$. We remove from $R$
these disks, as well as the (at most two) disks adjacent to them in
the tour $OPT$. Let $R'$ be the remaining set of disks after these
(constant number of) disks are removed from~$R$.

We claim that $p$ is contained in no more than a constant number of
the fat edges of $OPT$ joining two disks of $R'$.  Assume to the
contrary that more than a constant number of remaining fat edges of
the convex hull tour of $OPT$ connecting disks of $R'$ contain $p$.
Consider two such fat edges, $(d_1, d_2)$ and $(d_3, d_4)$, containing
$p$ in the region where they properly cross.  Each of these fat edges
must pass ``nearly'' diametrically across $D_p$.  That is they must
cross $D_p$ in such a way that they contain its center point $p$.  We
will show that by uncrossing these two fat edges we obtain a strictly
shorter adversarial tour, thereby contradicting the optimality
assumption. Suppose, without loss of generality, that, in order to
preserve connectivity, the uncrossing replaces $(d_1, d_2)$ and $(d_3,
d_4)$ with $(d_1, d_4)$ and $(d_3, d_2)$.  Let $v_i$ be the point of
intersection closest to $d_i$ where the adversarial edge incident on
$d_i$ crosses the boundary of $D_p$.  There are two cases.

Case 1: First suppose that $\angle v_1 p' v_4 = \angle v_3 p'
v_2 \leq \pi/2$, where $p'$ is the point where the adversarial edges
correspond to $(d_1,d_2)$ and $(d_3,d_4)$ cross; refer to Figure
~\ref{fig:swap1}.  Note that we could delete the portions of
adversarial edges $(v_1, v_2)$, and $(v_3, v_4)$ crossing $D_p$ and
replace these with the two portions of the circumference of $D_p$
connecting points $v_1, v_4$ and $v_2, v_3$ (see Figure
~\ref{fig:swap1}).  In deleting the portions of the adversarial edges
which intersect the interior of $D_p$, we saved at least $4\sqrt{K^2 -
1}$.  This value comes from the fact that the adversarial edge is
contained within the fat edge connecting these two disks, which needs
only ``nearly'' pass diametrically across $D_p$; it could be the case
that $p$ is contained within a fat edge on its boundary.  In replacing
the deleted portions of adversarial edges with two arcs comprising at
most half of the circumference of $D_p$ (with arc length at most $\pi
K$) we still get an overall savings of at least $4\sqrt{K^2 - 1} - \pi
K$.  Thus, we need to choose $K$ so that $4\sqrt{K^2 - 1} - \pi K \geq
9$ implying $K \geq 11 = O(1)$.  We will show later that this savings
of 9 units of tour length is more than enough to compensate for the
adversarial increase in the new proposed ordering.

\begin{figure}
\begin{subfigure}{0.5\textwidth}
\vspace{20mm}
\includegraphics[scale = 0.45]{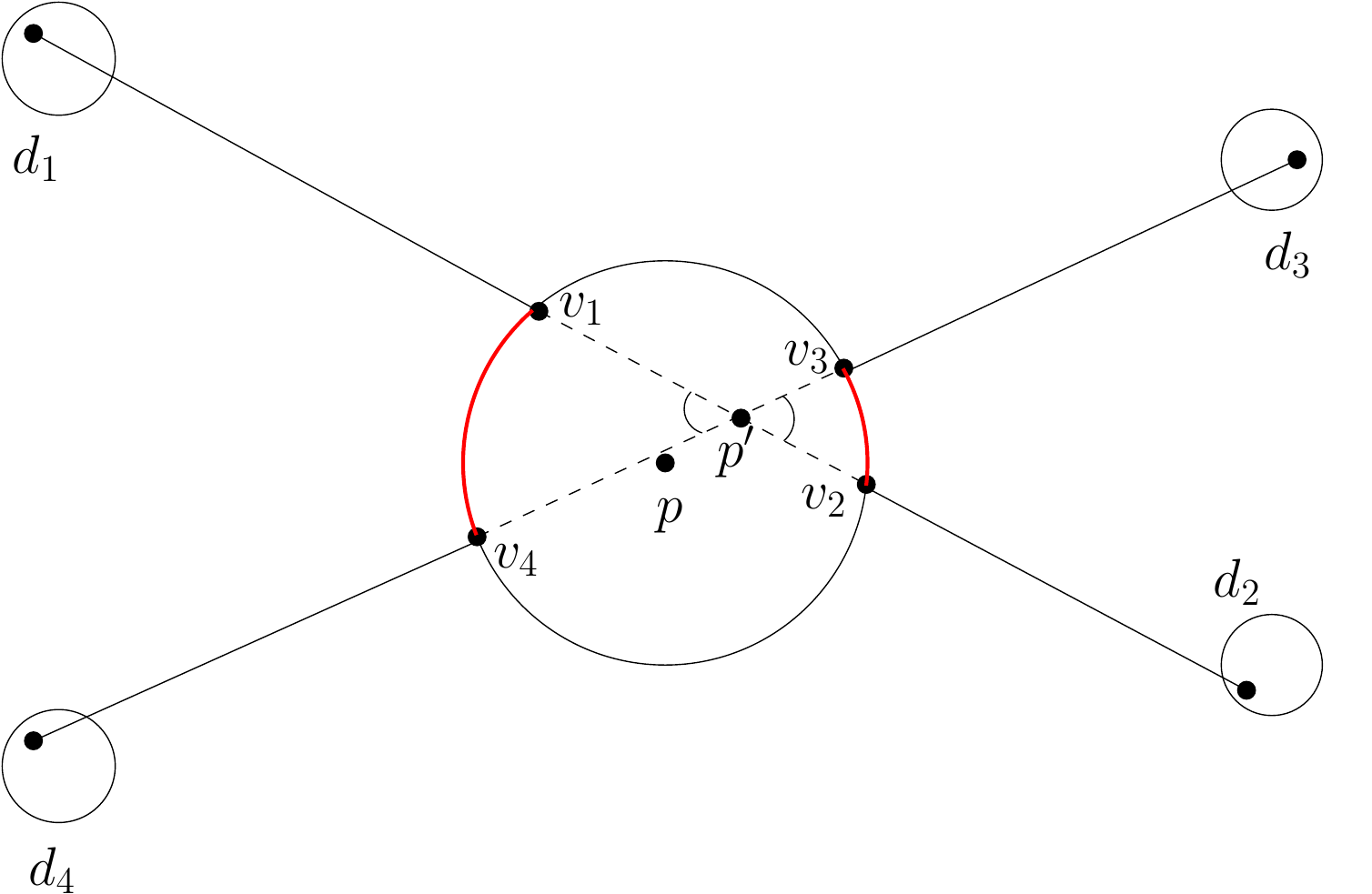}
\caption{Case 1.}
\label{fig:swap1}
\end{subfigure}
\begin{subfigure}{0.5\textwidth}
\includegraphics[scale = 0.45]{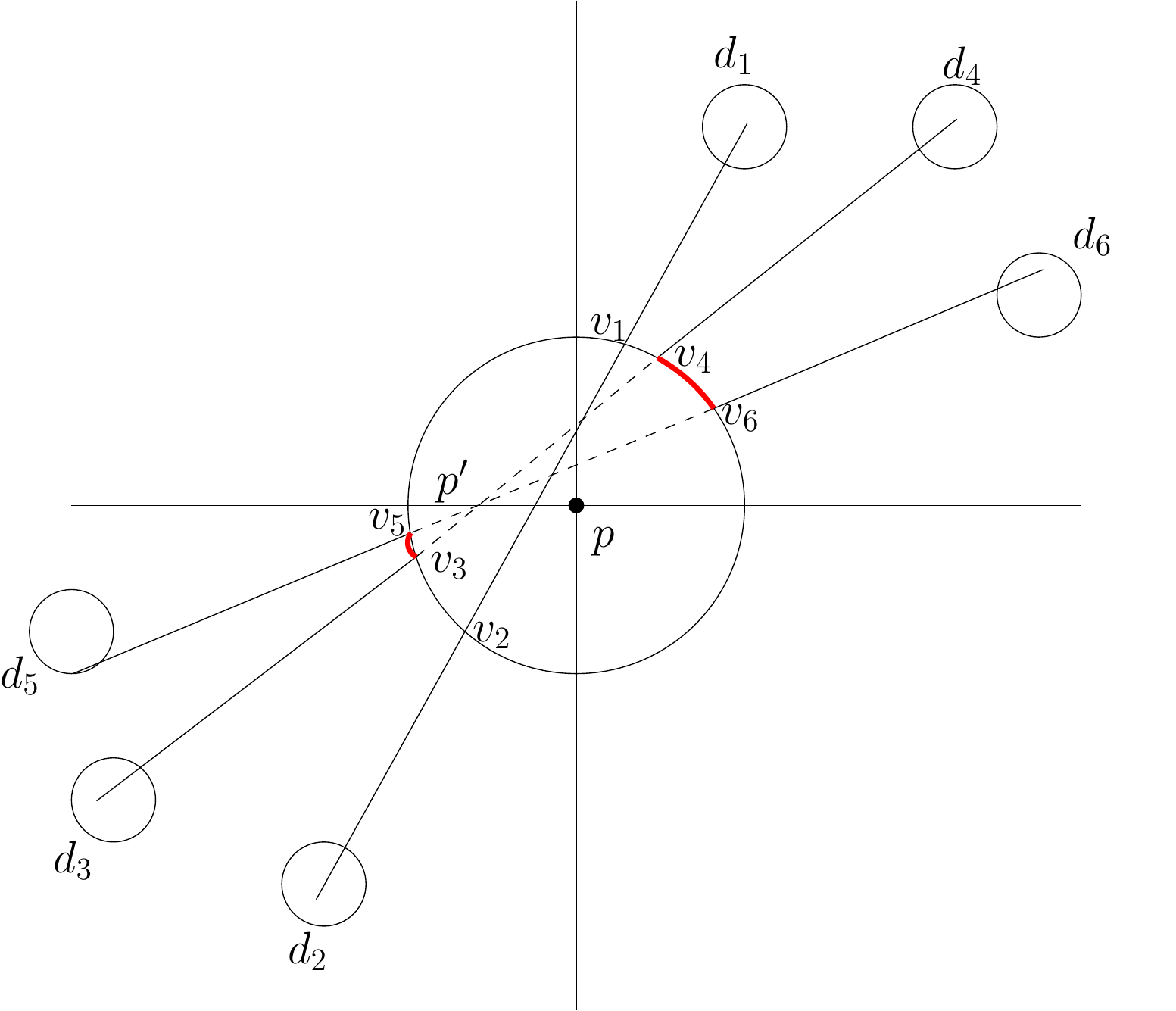}
\caption{Case 2.}
\label{fig:swap2}
\end{subfigure}
\caption{Case analysis for fat edge swapping.}

\label{fig:edge-swaps}
\end{figure}

Case 2: Next, suppose that $\angle v_1 p' v_4 = \angle v_3 p' v_2 > \pi/2$ for any pair of fat edges still containing $p$. We will begin by breaking the plane into quadrants whose origin is $p$ and now consider triples of fat edges that contain $p$.  We will only consider those triples of fat edges whose disk endpoints lie in quadrants I, and III, as we can repeat this process a finite number of times, each with a new perturbed (rotated) set of quadrants who's origin is $p$ so that eventually all remaining fat edges containing $p$ have this property.

Let $(d_1, d_2)$ be some remaining fat edge containing $p$ whose disk endpoints are in quadrants I, and III. Let $(d_3, d_4)$, $(d_5, d_6)$ be the second and third fat edge respectively that contain $p$, and have an endpoint in each of quadrants I and III, found in order by walking along the optimal tour from $d_2$ away from $d_1$.   As in case 1, let $v_i$ be the point of intersection of the adversarial edge emanating from disk $d_i$ and the boundary of $D_p$. We have that all of the $v_i$ are in quadrants I or III as well (see Figure ~\ref{fig:swap2}). Given that $v_1, v_2$ are in opposite quadrants, as well as points $v_3, v_4$, and $v_5, v_6$, a simple case analysis will show that we can delete two edges $(v_i, v_{i + 1})$, $(v_j, v_{j + 1})$ that cross the interior of $D_p$,  and replace them with two arcs of $D_p$, lying strictly within quadrants I and III, which make up at most half the circumference of $D_p$, while preserving connectivity of the tour.   This case analysis is independent of the specifics of which quadrant contains disk $d_i$, and only requires that each triple of edges we try to uncross go between opposite quadrants.

Thus, as in Case 1, we can argue that in replacing two edges crossing $D_p$ (saving at least $4\sqrt{K^2 - 1}$ in length) and replacing these with the two arcs of $D_p$ (which comprise at most half the circumference of $D_p$) we have a net savings of at least $4\sqrt{K^2 - 1} - \pi K$, which is at least 9 when $K \geq 11$. 
  
Each round of uncrossing (Case 1 or Case 2) reduces the tour length by a positive amount and reduces the depth of $p$ by at least one.  Therefore, this process will terminate in a finite number of rounds.  The number of fat edges  containing $p$ remaining after the process (Case 2) terminates will be at most (another) constant.

Finally, we argue that the constant 9 we save in tour length in each local uncrossing is enough  to compensate for whatever global increase in adversarial tour length may occur due to the new proposed ordering (since the adversary gets to re-optimize his selection of points).  

Again, consider an uncrossing of the original, hypothesized optimal tour, replacing $(d_1, d_2)$ and $(d_3, d_4)$ with $(d_1, d_4)$ and $(d_2, d_3)$.  Let $x, y, u, v$ be the (original) points adversarially chosen in disks $d_1, d_2, d_3, d_4$. After performing the uncrossing, we get a new tour, and thus the adversary gets to re-optimize by choosing a different set of points.  From the adversarial property of the initial solution, we have that the initial paths from $x$ to $u$ and from $v$ to $y$ were as long as possible over the intermediate choice of disks if we fix points $x, y, u, v$.   The new path chosen between disks $d_1$ and $d_3$ is at most that of the original path between $x$ and $u$, plus two diameters, one per disk.  That is, suppose the adversary chose new points $x', u'$ in disks $d_1, d_3$ respectively.  We can model the new path as traveling from $x'$ to $x$ in $d_1$ following the original path from $x$ to $u$ and then traveling from $u$ to $u'$ in $d_3$ costing at most two diameters.  Similarly for the path between $v$, and $y$.  Finally, in arguing about the additional length reconnecting the tour after the swap, we can upper bound, by triangle inequality, the length of the edge $(x', v')$  and $(u', y')$ as at most four diameters, one per disk $d_1, d_2, ... , d_4$ the portions of the edges $(x, y)$ $(u, v)$ strictly exterior to $D_p$ as well as at most half the circumference of $D_p$.  However, we have the savings of removing those portions of edges $(x, y)$ and $(u, v)$ that were strictly interior to $D_p$.  Recall that the diameter of $D_p$ was chosen such that removing two edges that pass ``nearly'' diametrically across $D_p$ and replacing them with two arcs comprising at most half of its circumference results in a net savings of 9 units.  Therefore in adding at most 8 diameters (or 8 units) upper bounding the adversarial increase, we still have a net savings of at least 1 unit.  Thus, we have a strictly shorter adversarial tour after performing the uncrossing, thereby contradicting the optimality assumption of the original tour. 
\end{proof}

\subsection{The $m$-Guillotine Structure Theorem}

We begin with some notation largely following \cite{dumitrescu2001approximation,mitchell1999guillotine}.  Let $G$ be an embedded planar straight line graph (PSLG) with edge set $E$ of  total length $L$, and let $R=\{d_1,\ldots,d_n\}$ be a set of disjoint unit-diameter disks $d_i$ in the plane. (In our setting, there will be exactly one vertex of $G$ within each disk $d_i\in R$.)  Let $\mathcal{B}$ be an axis-aligned bounding square of $R$.  We refer to an axis-aligned box $W \subset \mathcal{B}$ as a {\em window}, which will correspond to a particular subproblem of our dynamic program. We refer to an axis-parallel line $\ell$ that intersects window $W$ as a {\em cut} of window~$W$.  

Consider a cut $\ell$ for window $W$; assume, without loss of generality, that $\ell$ is vertical. The intersection $\ell \cap (E \cap W)$ of $\ell$ with the edge set contained in $W$ consists of a, possibly empty, set of subsegments (which include, as a degenerate case, singleton points) along $\ell$.  We let $\xi$ be the number of endpoints of subsegments along $\ell$, and let these endpoints along $\ell$ be denoted by $\beta_1, \beta_2,\ldots, \beta_\xi$ ordered by decreasing $y$ coordinate.  For a positive integer $m$ we define the {\em $m$-span} $\sigma_m(\ell)$ of $\ell$ to be $\emptyset$ if $\xi \le 2(m - 1)$, and the possibly zero length segment $\beta_m, \beta_{\xi - m + 1}$, joining the $m$th and the $m$th from the last endpoints along $\ell$ otherwise.

\newcommand{\barr}[1]{\overline{#1}}

The intersection of $\ell \cap R \cap W$ consists of a possibly empty set of $\xi_R \le |R \cap W|$ subsegments of $\ell$, one subsegment for each disk (bounding box) intersected by $\ell\cap W$.   Let these disk/boxes be $d_1, d_2,\ldots, d_{\xi_R}$ in order of decreasing $y$ coordinate.  For a positive integer $m$ we define the {\em $m$-disk-span} $\sigma_{m, R}(\ell)$ of $\ell$  to be the (possibly empty) line segment joining the bottom endpoint of $d_m \cap \ell$ to the top endpoint of $d_{\xi_R - m + 1} \cap \ell$.
In fact, as observed in \cite{mitchell2007ptas}, it suffices to consider the $m$-disk-span of the set of axis-aligned bounding squares of the input disks, since the charging scheme charges the perimeters of the regions, which are, within a constant factor, the same whether we deal with circular disks or square ($L_\infty$) disks. 
\full{(While the bounding boxes of circular disks may partly overlap, they do have constant depth, which is all that is required.)}

As in \cite{dumitrescu2001approximation} we define a line (cut) $\ell$ to be an {\em $m$-good cut} with respect to $W$ if $\sigma_m(\ell) \subseteq E$ and $\sigma_{m, R} \subseteq E$.  Finally, we say that $E$ {\em satisfies the $m$-guillotine property with respect to $W$} if either (1) $W$ does not fully contain any disk; or (2) there exists an $m$-good cut $\ell$ that splits $W$ into $W_1$, and $W_2$ and, recursively, $E$ satisfies the $m$-guillotine property with respect to $W_1$, and $W_2$.
The following is shown in \cite{dumitrescu2001approximation}, using a variant of the charging scheme of \cite{mitchell1999guillotine}:

\begin{theorem}
\label{thm:mguil}
[\cite{dumitrescu2001approximation}] 
Let $G$ be an embedded connected planar graph with edge set $E$ of total length $L$, and let $R$ be a given set of pairwise-disjoint equal-radius disks (of radius $\delta$) each of which intersects $E$.  Assume that $E$ and $R$ are contained in the square $\mathcal{B}$.  Then for any positive integer $m$ there exists a connected planar graph $G'$ that satisfies the $m$-guillotine property with respect to $\mathcal{B}$ and has edge set $E' \supseteq E$ of length $L' \leq (1+O(1/m))L + O(\delta/m)$.
\end{theorem}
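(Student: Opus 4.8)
The plan is to build $G'$ top-down, by a recursive guillotine partition of the bounding square $\mathcal{B}$ in which each cut is forced to be $m$-good by installing its two spans as new edges, and then to control the total length of everything installed by a charging argument. Concretely, I would start with the window $W=\mathcal{B}$ and, while the current window $W$ fully contains a disk, choose one axis-parallel cut $\ell$ (specified below), add the segments $\sigma_m(\ell)$ and $\sigma_{m,R}(\ell)$ to the current edge set, and recurse on the two sub-windows into which $\ell$ splits $W$. Feasibility is then essentially free: since we only ever \emph{add} segments we have $E'\supseteq E$ throughout, so $G'$ inherits connectivity from $G$ and still meets every disk, and by construction the cut chosen in each window is $m$-good, so $E'$ satisfies the $m$-guillotine property with respect to $\mathcal{B}$. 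Thus the entire content of the theorem is the length bound on the installed spans.

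The core is to bound $\sum_\ell\bigl(|\sigma_m(\ell)|+|\sigma_{m,R}(\ell)|\bigr)$. For the edge-spans I would use the standard dark-region accounting. Fix a window $W$ of width $w$ and height $h$, and consider vertical cuts $\ell_x$. A point $q=(x,y)$ lies on $\sigma_m(\ell_x)$ precisely when the vertical line through $q$ meets $E\cap W$ in at least $m$ points above $q$ and at least $m$ below it; calling such $q$ vertically $m$-dark, we get the identity $\int_0^w |\sigma_m(\ell_x)|\,dx = \mathrm{area}(D_v)$, where $D_v$ is the vertically $m$-dark region, and symmetrically $\int_0^h|\sigma_m(\ell_y)|\,dy=\mathrm{area}(D_h)$ for horizontal cuts. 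Hence the cheapest vertical (resp.\ horizontal) cut costs at most the average $\mathrm{area}(D_v)/w$ (resp.\ $\mathrm{area}(D_h)/h$), and I would take whichever orientation is cheaper. The key lemma charges each $m$-dark point against the at least $2m$ edges of $E$ crossing its cut-line, giving each such edge an $O(1/m)$ share; this is what forces the favorable-cut cost down to $O(1/m)$ times the length of the edges it resolves. Using both orientations is essential here — it is what rescues degenerate configurations (e.g.\ many nearly-parallel edges), where one orientation's span is expensive but the perpendicular one is almost free.

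The disk-spans are handled by the same scheme applied to the axis-aligned bounding squares of the disks, exactly as in \cite{mitchell2007ptas}: the squares have only constant depth, so $\sigma_{m,R}(\ell)$ is charged against their perimeters rather than against $E$, and each disk of radius $\delta$ contributes perimeter $O(\delta)$ with an $O(1/m)$ coefficient, for a total disk-span length of $O(n\delta/m)$. Since $E$ is connected and meets all $n$ pairwise-disjoint radius-$\delta$ disks, we have $L=\Omega(n\delta)$, so $O(n\delta/m)=O(L/m)$ is absorbed into the multiplicative error, leaving only a single $O(\delta/m)$ boundary correction. Summing, $L' = L+\sum_\ell(|\sigma_m|+|\sigma_{m,R}|)\le (1+O(1/m))L+O(\delta/m)$.

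The step I expect to be the real obstacle is the \emph{telescoping} of the charges across recursion levels. Bounding a single favorable cut by $O(1/m)$ times the local edge length is not enough, because the same edge reappears in many descendant windows, and naively summing $O(1/m)\,L(W)$ over all windows overcounts badly. The fix — and the delicate part — is to charge each cut only against the \emph{newly separated} dark portion it resolves, a quantity that decreases strictly down the recursion, so that every sub-length of $E$ is charged $O(1/m)$ in total; making this amortization precise (together with the routine technicalities of degenerate/vertical edges and of alternating horizontal and vertical favorable cuts) is where the actual work of \cite{dumitrescu2001approximation,mitchell1999guillotine} lies.
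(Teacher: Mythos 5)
This theorem is one the paper does not prove at all --- it is imported verbatim from \cite{dumitrescu2001approximation} (``using a variant of the charging scheme of \cite{mitchell1999guillotine}''), and your reconstruction is essentially that standard proof: the dark-region integral identity $\int|\sigma_m(\ell_x)|\,dx=\mathrm{area}(D_v)$, the averaging over \emph{both} cut orientations (correctly identified as what rescues configurations of nearly-parallel edges), the charging of each dark point in $O(1/m)$ shares against its $\geq 2m$ crossing edges with the cross-level amortization rightly flagged as the crux, and the $m$-disk spans charged to the constant-depth bounding-square perimeters as in \cite{mitchell2007ptas}, with the $\Omega(n\delta)$ packing bound absorbing $O(n\delta/m)$ into the $(1+O(1/m))L + O(\delta/m)$ form. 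Your proposal is correct in approach and faithful to the cited source's argument, so there is nothing in-paper to contrast it with.
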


In the constructive proof of Theorem~\ref{thm:mguil}, $m$-spans are added to $E$, whose lengths are charged off to a small fraction ($O(1/m)$) of the length $L$ of $E$.  Consider the edges of $E$ that cross an $m$-span, $ab$ that is added: By Theorem~\ref{thm:depth} we know that the associated fat edges (of width 1) have constant depth.  This implies that the number of edges of $E$ that cross an $m$-span, $ab$, that arises in the constructive proof of Theorem~\ref{thm:mguil} is $O(|ab|)$.

\begin{theorem}
\label{thm:depth-cor}
In the graph $G'$ that is obtained from $G$ according to Theorem~\ref{thm:mguil}, the segments of $E'$ that arise as $m$-span edges for the input edges $E$ are such that the number of edges of $E$ intersecting an $m$-span edge $ab$ is at most $O(|ab|)$.
\end{theorem}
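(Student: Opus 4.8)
The plan is to prove the slightly more general statement that for \emph{any} segment $ab$ lying on a cut line $\ell$, the number of edges of $E$ crossing $ab$ is $O(|ab|)$; the $m$-span edges arising in Theorem~\ref{thm:mguil} are then a special case. Assume without loss of generality that $\ell$ is vertical. To each edge $e\in E$ crossing $ab$ I would associate its fat edge $F_e$, the convex hull of the two unit-diameter disks that $e$ joins; since $e\subseteq F_e$, the crossing point $p_e=e\cap ab$ satisfies $p_e\in F_e$. Because $E$ is (derived from) the optimal convex-hull tour $OPT$, Theorem~\ref{thm:depth} guarantees that the fat edges $\{F_e\}$ have constant depth: every point of the plane lies in at most a constant number $c=O(1)$ of them. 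The entire difficulty is to convert this \emph{pointwise} depth bound into a bound on the \emph{number} of fat edges meeting $ab$, and the bridge is an area (equivalently, length) argument exploiting that each fat edge has width $1$.

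For the area argument, let $N=\{x\in\mathbb{R}^2:\mathrm{dist}(x,ab)\le 1\}$ be the unit neighborhood of $ab$, a ``stadium'' of area $2|ab|+\pi=O(|ab|+1)$. For each crossing edge $e$ we have $p_e\in ab$, so $B(p_e,1)\subseteq N$. The key geometric sub-claim is that each crossing fat edge contributes a constant area locally, i.e.\ $\mathrm{area}(F_e\cap B(p_e,1))\ge\kappa$ for an absolute constant $\kappa>0$; this holds because $F_e$ is a capsule of minimal width exactly $1$ (the Minkowski sum of the segment between the two disk centers with a radius-$1/2$ disk) and $p_e$ lies on the core segment $e\subseteq F_e$, so a fixed-size neighborhood of $p_e$ inside $F_e\cap B(p_e,1)$ has area bounded below by an absolute constant. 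Granting the sub-claim and writing $\mathbf{1}[\cdot]$ for the indicator, the depth bound gives
\[
\kappa\cdot\#\{e:e\text{ crosses }ab\}\ \le\ \sum_{e}\mathrm{area}\bigl(F_e\cap B(p_e,1)\bigr)\ \le\ \int_{N}\sum_{e}\mathbf{1}[x\in F_e]\,dx\ \le\ c\cdot\mathrm{area}(N)\ =\ O(|ab|+1),
\]
so the number of edges of $E$ crossing $ab$ is $O(|ab|)$, as desired.

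The main obstacle is exactly the local-contribution sub-claim: one must ensure that \emph{every} crossing fat edge occupies $\Omega(1)$ area within the bounded neighborhood $N$, so that constant depth integrates against the $O(|ab|)$ area of $N$. The subtlety is that the two unit disks defining $F_e$ may be arbitrarily far from $ab$, so one cannot merely invoke ``$F_e$ contains a unit disk''; the argument must use that the capsule has width $1$ \emph{everywhere along its axis}, including at $p_e$. A clean one-dimensional version of the same idea is to intersect with $\ell$ itself: when $\ell$ strictly separates the two disks defining $F_e$, the chord $F_e\cap\ell$ runs between the two external common tangents of the disks and hence has length at least their perpendicular gap $1$, so the intervals $\{F_e\cap\ell\}$ have length $\ge 1$ and constant depth on $\ell$, forcing at most $O(|ab|)$ of them to meet $ab$. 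The only configurations escaping this are the $O(1)$ edges whose chord overruns an endpoint of $ab$ (their intervals all contain a fixed endpoint of $ab$, so depth caps their number) and the edges through a disk straddling $\ell$; since the disks are disjoint and of diameter $1$, only $O(|ab|+1)$ of them can straddle $\ell$ within the vertical extent of $ab$, and each is incident to a bounded number of tour edges (indeed $2$, as $OPT$ is a cycle). Summing the three contributions again yields the claimed $O(|ab|)$ bound.
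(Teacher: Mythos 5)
Your proposal is correct and takes essentially the same route as the paper, which in fact gives no detailed proof of Theorem~\ref{thm:depth-cor} at all: it simply asserts that the constant depth of the width-$1$ fat edges (Theorem~\ref{thm:depth}) implies the $O(|ab|)$ crossing bound, and your stadium-neighborhood packing argument is precisely the quantitative content of that assertion (one small fix: $p_e$ lies within distance $1/2$ of the capsule's axis, not on the Minkowski core itself, so take the nearest axis point $c$, note $B(c,1/2)\subseteq F_e\cap B(p_e,1)$, and set $\kappa=\pi/4$; the resulting bound is $O(|ab|+1)$, which is what is actually needed and used, since the relevant spans have length bounded below). Only your auxiliary one-dimensional variant has a genuine slip---an edge incident to a disk straddling $\ell$ far outside the vertical extent of $ab$ can still cross $ab$, so case (3) as stated misses such edges; however, in that configuration the chord $F_e\cap\ell$ containing $p_e$ has length at least $1$ (and is caught by your interval-depth count), or else the nicking disk lies within $O(1)$ of $ab$ and the packing count applies, so the patch is routine and your main argument is unaffected.
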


Provided that the input $R$ is nontrivial ($n\geq 2$), the length $L^*$ of an optimal solution $OPT$ (path or cycle) to ATSP is at least 2; thus, Theorem~\ref{thm:mguil} shows that there exists an $m$-guillotine supergraph of $OPT$ of length $L'\leq (1+O(1/m))L^*$.
Further, as shown in \cite{dumitrescu2001approximation,mitchell1999guillotine,mitchell2007ptas}, one can make the $m$-guillotine conversion using cuts whose coordinates are from among a discrete set of $O(n)$ candidate $x$- and $y$-coordinates, for fixed $m$.
We will show how to use this fact, along with the structure of an optimal adversarial solution, to construct via dynamic programming an $m$-guillotine structure from which we can extract an approximation to $OPT$, with approximation factor $(1 + \epsilon)$, for any $\epsilon > 0$.  (Here, $m=O(1/\epsilon)$.)

\subsection{The Dynamic Program}

A subproblem of our dynamic program (DP) is responsible for computing a shortest total length connected network that spans the input set $R$ of disks (at their sample points) while satisfying a constant-size, $O(m)$, set of boundary conditions.  The boundary conditions specify $O(m)$ disks that the subproblem is responsible for interconnecting, as well as conditions on how the computed network within this subproblem should interact with optimal solutions computed within abutting subproblems.  As we cannot afford to keep track of all (potentially $\Omega(n)$) interconnections of the optimal ATSP solution, $OPT$, between two rectangles that bound subproblems, the $m$-guillotine structure theorem, together with our additional structural results, allow us to compactly summarize the interconnection information well enough to ensure approximation within factor $(1+\epsilon)$ of optimal.

Unlike the PTAS for TSPN, where the DP can choose any point within each region of $R$, in computing a minimum-weight connected, Eulerian, $m$-guillotine spanning subgraph over $\mathcal{G}$, in the ATSP we have no control over the point being spanned within each region:  Once we produce an ordering $\sigma_R$, the adversary gets to solve an offline longest path problem to choose the (``worst possible'') point $p_i\in d_i$ within each region $d_i\in R$ our tour must visit.  Thus, we need to create a minimum weight connected spanning Eulerian subgraph over $\mathcal{G}$ that satisfies the $m$-guillotine property {\em and} satisfies a certain {\em adversarial subpath property}, which allows us to show that in the resulting network computed by DP, we can extract a polygonal tour of $R$ that satisfies the adversarial property\full{ required for our model}.  In essence, we need the DP subproblems to be able to estimate (approximately) what the cost of an {\em adversarial} solution will be, if we extract from the optimized $m$-guillotine network a tour through~$R$.

In particular, each DP subproblem is specified by a window $W \subseteq \mathcal{B}$, along with the following additional information:
\begin{enumerate}
\item An $m$-span (possibly empty) on each of the 4 sides of $W$, each with a parity bit indicating whether the number of edges incident to the $m$-span from outside of $W$ is even or odd;    
\item $O(m)$ {\em specified edges}, which are the network edges crossing the boundary of $W$ that are not crossing one of the (up to 4) $m$-spans;  
\item An $m$-disk span (possibly empty) on each of the 4 sides of $W$, with a specified sample point given for the first and for the last disk along the $m$-disk span;   
\item $O(m)$ {\em specified input disks} (i.e., disks of $R$ not intersecting an $m$-disk span) intersecting the boundary of $W$; 
\item A specified sample point of $\mathcal{G}$ on the boundary of each of the $O(m)$ specified input disks, where the network is required to visit the associated disks (these are the ``guessed'' positions of the adversarial visitation points for the specified disks);  
\item For each of the $O(m)$ specified input disks, we indicate whether the specified sample point of the disk is visited by the network being computed for the subproblem, and, if so, whether its degree in that network is 1 or 2. (The total degree of the sample point, using edges associated with subproblems on both sides of the cut, will be 2.)  
\item An interconnection pattern specifying the subsets of the $O(m)$ boundary elements (specified input disks, specified edges, $m$-spans, and $m$-disk spans) that form connected components within~$W$. 
\end{enumerate}

There are only $O(n^4)$ choices for $W$, $n^{O(m)}$ choices for the specified edges/disks, and a constant ($O(g(m))$, for some function $g$) number of choices of the $O(m)$ bits and the interconnection patterns.  Thus, there are a polynomial number of subproblems for the DP.

A subproblem in the dynamic program requires one to compute a minimum-length $m$-guillotine network satisfying the following constraints:
\begin{description}
\item[(i)] The network is comprised of edges of the following types: (a). edges from the set $E_{\mathcal{G}}$ of edges linking a sample point of $\mathcal{G}$ on one disk to a sample point of $\mathcal{G}$ on another disk; (b). edges of type (a), $E_{\mathcal{G}}$, truncated at a (Steiner) attachment point on an $m$-span where the edge crosses the $m$-span or passes through an endpoint of the $m$-span; and (c) $m$-spans and $m$-disk spans that lie along cuts in the decomposition (recall that cuts lie along $O(n)$ discrete horizontal/vertical lines).  The attachment points and the endpoints of $m$-spans and $m$-disk spans constitute a set, $\mathcal{H}$, of {\em Steiner points}, distinct from the sample points $\mathcal{G}$ on the boundaries of the disks.
\item[(ii)] Each sample point of $\mathcal{G}$ within $W$ that is visited by the network has degree~2. 
\item[(iii)] The number of edges of type (b) (i.e.,  edges of $E_{\mathcal{G}}$ truncated at an $m$-span) incident on an $m$-span segment $ab$ is even or odd, according to whether the parity bit of the $m$-span is even or odd, so that the total sum of the degrees of the Steiner points $\mathcal{H}$ along an $m$-span is even. Further, the number of edges of type (b) incident on an $m$-span segment $ab$ is bounded by $c_0\cdot |ab|$, where $c_0$ is a constant arising from the structure Theorem~\ref{thm:depth}.
\item[(iv)] The network must be $m$-guillotine with respect to $W$, and, for each cut in the recursive partitioning of $W$, in the total length of the network we count each $m$-span twice; these doubled $m$-spans allow us to augment the resulting network to be Eulerian~\cite{mitchell1999guillotine}, and thereby to extract a tour (see below).  Further, we count the length of each $m$-disk span a constant ($O(1)$) times as well; this will allow the $m$-disk spans to be converted into adversarial subpaths visiting the set of disks that are spanned.
\item[(v)] The network must utilize the specified edges (which straddle the boundary of $W$).
\item[(vi)] The network must visit, at a sample point, each of the input disks \full{that are} interior to~$W$.
\item[(vii)] The network must visit each specified disk whose bit indicates it should be visited by the subproblem, at the specified sample point for that disk. Further, the network must visit the specified sample points for the first and last disk associated with each nonempty $m$-disk span.
\item[(viii)] The network must obey the interconnection pattern of the subproblem.
\item[(ix)] The network obeys the {\em adversarial subpath property}: Any maximal path, endpoints non-inclusive, within the network that goes through only sample points $\mathcal{G}$ is a longest path through the sequence of disks on which the sample points lie (one per disk).
\end{description}

\begin{lemma}
\label{lem:conversion} 
When an optimal tour $OPT$ is rounded to the grid $\mathcal{G}$ and then converted to become $m$-guillotine in the process that proves Theorem~\ref{thm:mguil}, the network that results from the augmentation of $OPT$ satisfies conditions (i)-(viii) at every level of the recursive process, for appropriate choices of the specified edges, disks, and interconnection patterns.
\end{lemma}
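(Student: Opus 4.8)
The plan is to exhibit, for the network $G'$ produced by rounding $OPT$ to $\mathcal{G}$ and then applying the conversion of Theorem~\ref{thm:mguil}, a choice of boundary data for every window $W$ arising in the recursive $m$-guillotine decomposition, and then to check conditions (i)--(viii) one at a time against this concrete object. First I would fix notation: by Lemma~\ref{lem:sample-points}, rounding $OPT$ to its nearest sample points yields a cycle $T'$ visiting exactly one point of $\mathcal{G}$ per disk, with length at most $(1+O(1/m))|OPT|$; applying Theorem~\ref{thm:mguil} to $T'$ then produces a connected $m$-guillotine supergraph $G'$ whose only added segments are the $m$-spans and $m$-disk spans placed along the recursive cuts. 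For each window $W$ in this recursion I would define the boundary data of the associated subproblem by simply \emph{reading it off} $G'$: the $m$-spans and $m$-disk spans on the boundary of $W$ are those introduced by the ancestor cuts that created $W$; the specified edges are the (constantly many, by the count bound below) edges of $G'$ crossing the boundary of $W$ that do not lie on an $m$-span; the specified disks and their specified sample points are the disks met by the boundary together with the unique $T'$-visitation point in each; and the interconnection pattern is the partition into connected components that $G'$ actually induces inside $W$. With these choices the content of each condition reduces to matching $G'$ against a definition.

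I would then verify the conditions in three groups. Conditions (i) and (iv) --- that every edge is of type (a)/(b)/(c) and that $G'$ is $m$-guillotine with the doubled $m$-span and $O(1)$-counted $m$-disk-span accounting --- are precisely the output guarantees of the conversion in Theorem~\ref{thm:mguil}, so nothing new is required beyond aligning terminology. Conditions (ii), (vi), and (vii) follow from the fact that $T'$ is a single cycle meeting every disk in one sample point: each used sample point has degree $2$ in $T'$, and the conversion attaches the added spans only at Steiner crossing points on the cut lines, while the doubling that restores Eulerian parity acts on these Steiner points and never on the sample points, so sample-point degrees are preserved at value $2$; moreover, since $T'$ already visits every disk, in particular every disk interior to $W$ and every specified disk is visited, at exactly the sample point we guessed, and likewise for the first and last disk of each nonempty $m$-disk span. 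Conditions (v) and (viii) hold by the very definition of the boundary data, which was read off $G'$. The one substantive point is the second half of (iii): that at most $c_0|ab|$ truncated (type (b)) edges meet each added $m$-span $ab$. This is supplied by Theorem~\ref{thm:depth-cor} (itself a consequence of the constant-depth structure Theorem~\ref{thm:depth}), which guarantees that only $O(|ab|)$ edges of $T'$ cross any $m$-span arising in the construction; the parity half of (iii) is then satisfied trivially by setting each parity bit to the true parity of the crossing edges.

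The main obstacle is not any single condition but the requirement that all of this data be consistent \emph{at every level of the recursion simultaneously}: I must check that the $m$-span and $m$-disk-span segments a child inherits on its boundary are exactly those the parent's cut placed there, that an edge counted as a specified edge for a child is the same physical edge the parent sees crossing the shared cut, and that the component partitions guessed for the two children glue back to the partition of the parent. Because the conversion of Theorem~\ref{thm:mguil} builds $G'$ one cut at a time and never alters edges already committed inside a completed subwindow, this consistency propagates by induction on the recursion depth, and the bound of Theorem~\ref{thm:depth-cor} is exactly what keeps the amount of inherited boundary data $O(m)$, so that the guessed data is legal DP data at all. I would emphasize that the genuinely hard adversarial condition (ix) is deliberately \emph{absent} from this lemma: establishing it for the tour eventually extracted is treated separately, and the present lemma asserts only that the augmented $OPT$ is a feasible object for the structural, non-adversarial constraints of the dynamic program.
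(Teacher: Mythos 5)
Your proposal is correct and matches the paper's proof in essence: both read the boundary data (specified edges, disks, sample points, interconnection pattern) off the converted network so that most conditions hold by construction, and both isolate the count bound in condition (iii) as the one substantive point, discharging it via Theorem~\ref{thm:depth-cor} (i.e., the constant-depth structure Theorem~\ref{thm:depth}). The only divergence is cosmetic: the paper's proof, though the lemma states (i)--(viii), adds a closing remark that the augmented $OPT$ also satisfies the adversarial subpath property (ix) because $OPT$ itself is adversarial --- a fact worth noting since the DP enforces (ix) as a feasibility constraint --- whereas you deliberately defer (ix), which is consistent with the statement as written.
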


\begin{proof}
During the process that converts $OPT$ to be $m$-guillotine, according to the constructive proof of Theorem~\ref{thm:mguil}, most of the conditions hold automatically, by construction.  Edges of $OPT$ that cross an $m$-span, $ab$, do so at a point of $\mathcal{H}$ that has degree 4 (since the crossing edge is partitioned at the crossing point, becoming two truncated type-(b) edges, and the $m$-span is partitioned at the crossing point as well).  An $m$-span edge $ab$, by construction, extends between two points ($a$ and $b$, each a Steiner point) on edges of $OPT$ (each of which is thereby partitioned into two truncated type-(b) edges). Theorem~\ref{thm:depth-cor} implies that the fat edges associated with the edges of $OPT$ have bounded depth, implying condition (iii) holds.  Condition (viii) holds for the choice of interconnection pattern that is implied by $OPT$. 
\full{And, finally, the adversarial subpath property holds because of the adversarial path property of $OPT$ itself.}
\short{The adversarial subpath property holds because of the adversarial path property of $OPT$ itself.}
\end{proof}

We now discuss the enforcement of condition (ix), the adversarial subpath property, which is key to our being able to account for the adversary's choices during our optimization of the network length, assuring that, in the end, we can extract from the computed network a tour that is adversarial and not much longer than the overall network.

Let $(W,\Sigma)$ denote a subproblem associated with window $W$, where $\Sigma$ is a specification of the boundary constraints information (1)-(7). The dynamic programming recursion optimizes the partition of the subproblem $(W,\Sigma)$ into two subproblems, $(W_1,\Sigma_1)$ and $(W_2,\Sigma_2)$, by a horizontal or vertical cut line $\ell$ (intersecting $W$ and passing through one of the $O(n)$ discrete values of $x$, $y$-coordinates that define windows). Crucial to the correctness of the algorithm is that this recursion preserves the properties specified by the conditions (i)-(ix).

The objective function, $f(W,\Sigma)$, measures the total length of the network restricted to the window $W$; in particular, edges of $E_{\mathcal{G}}$ that are specified in the boundary constraints $\Sigma$ have their length partitioned and assigned to subproblems through which they pass.

The DP recursion optimizes over the choice of the cut line $\ell$ that partitions $W$ into $W_1$ and $W_2$, as well as the boundary conditions, $\Sigma_\ell$, along the cut, which will be part of the specifications $\Sigma_1$ and $\Sigma_2$.  The conditions $\Sigma_1$ and $\Sigma_2$ must be compatible with each other and with the choice of boundary conditions, $\Sigma_\ell$, across the cut $\ell$.  In particular, in order for $\Sigma_1$ and $\Sigma_2$ to be compatible with each other and with $\Sigma$, the specified edges of $E_{\mathcal{G}}$ across $\ell$ must match, as well as the $m$-span and $m$-disk span along the cut $\ell$. Further, the interconnection pattern of $\Sigma$ must specify subsets of boundary elements for $W$ that are yielded by taking the union of interconnection patterns for $(W_1,\Sigma_1)$ and $(W_2,\Sigma_2)$.

We let $\Sigma_\ell^{(R)}$ denote the partial specification of the boundary conditions $\Sigma_\ell$, in which we specify which pairs of disks from $R$ constitute the specified edges crossing $\ell$, but do not specify the actual sample points on the boundaries of these disks that define the endpoints of the edges from $E_{\mathcal{G}}$ being specified.  (In other words, $\Sigma_\ell^{(R)}$ specifies only the equivalence classes of the full set of conditions, $\Sigma_\ell$; the refinement of these equivalence classes will be specified in the optimization within the ``$\max$'' term of the recursion below.) The DP recursion is
$$f(W,\Sigma) = \min_{\ell, \Sigma_\ell^{(R)}} \{ \max_{\Sigma_\ell \in X(\Sigma_\ell^{(R)})} ( f(W_1(\ell), \Sigma_1(\Sigma_\ell)) + f(W_2(\ell),\Sigma_2(\Sigma_\ell)) ) \}$$
where the outer minimization is over choices of the cut $\ell$ and the cross-cut boundary conditions $\Sigma_\ell^{(R)}$, and the inner maximization is over choices of $\Sigma_\ell$ that are in the set $X(\Sigma_\ell^{(R)})$ of all boundary conditions across the cut $\ell$ that are refinements of the choice $\Sigma_\ell^{(R)}$, specifying precisely which sample points are utilized for each of the disks of $R$ that are involved in the specification $\Sigma_\ell^{(R)}$ (and not already specified by the ``parent'' choice, $\Sigma$, in cases in which edges crossing $\ell$  also extend outside of $W$ and have their sample points specified within $\Sigma$).  In the expression above, $W_1(\ell)$ and $W_2(\ell)$ are the subwindows of $W$ on either side of the cut $\ell$, and $\Sigma_1(\Sigma_\ell)$ and $\Sigma_2(\Sigma_\ell)$ are the corresponding boundary conditions on either side of $\ell$ that are inherited from $\Sigma$ and consistent with the conditions $\Sigma_\ell$. The fact that we maximize over the choices that the adversary can make, in all choices that cross the cut $\ell$, implies that we preserve the adversarial subpath property:

\begin{lemma}
\label{lem:compatible}
\full{The DP algorithm results in a network satisfying condition (ix), the adversarial subpath property.}
\short{The DP yields a network satisfying (ix), the adversarial subpath property.}
\end{lemma}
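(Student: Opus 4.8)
The plan is to prove the statement by induction over the recursive guillotine decomposition (e.g., on the number of disks that a window $W$ fully contains), establishing a strengthened invariant: for every subproblem $(W,\Sigma)$, the network computed by the DP for $(W,\Sigma)$ has the property that every maximal subpath through sample points contained in $W$ attains the \emph{longest} possible length over all choices of one sample point per disk in its sequence, subject to the two endpoints of the subpath being held fixed. Those endpoints are either Steiner points on $m$-spans or $m$-disk spans interior to $W$, or sample points that are pinned down by the boundary specification $\Sigma$. Condition (ix) for the root window $\mathcal{B}$ is then the special case of this invariant in which nothing is fixed externally.

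First I would treat the base case: a window $W$ that fully contains no disk. Here the network inside $W$ has no interior sample points to optimize over, so every maximal subpath is degenerate and the invariant holds vacuously. For the inductive step, consider the cut $\ell$ that the outer minimization of the recursion selects, splitting $W$ into $W_1$ and $W_2$. Fix any maximal subpath $P$ through sample points in the resulting network. If $P$ lies entirely within $W_1$ or within $W_2$, the inductive hypothesis applied to that subproblem immediately gives that $P$ is a longest path for its fixed endpoints. Otherwise $P$ crosses $\ell$, and at each crossing it does so along a specified edge whose pair of disks is fixed by $\Sigma_\ell^{(R)}$ and whose endpoint sample points are determined by the refined specification $\Sigma_\ell$.

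Decomposing $P$ at such a crossing vertex $p$ into $P_1\subseteq W_1$ and $P_2\subseteq W_2$, the length is additive, $|P| = |P_1(p)| + |P_2(p)|$, and the interiors of $P_1$ and $P_2$ use disjoint sets of disks. Hence the joint maximum factors as the maximum over $p$ of the sum of the two inner maxima taken over the interiors of $P_1$ and $P_2$. This is precisely the quantity the recursion computes: the inner maxima over the interiors are the contributions $f(W_1,\Sigma_1(\Sigma_\ell))$ and $f(W_2,\Sigma_2(\Sigma_\ell))$, which the inductive hypothesis guarantees are longest for the value of $p$ encoded in $\Sigma_\ell$, and the outer choice of $p$ is exactly the maximization $\max_{\Sigma_\ell \in X(\Sigma_\ell^{(R)})}$ over refinements of $\Sigma_\ell^{(R)}$. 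A subpath that crosses $\ell$ several times is handled by applying this decomposition at each crossing in turn.

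The main obstacle is to argue this factorization of the adversarial maximum cleanly in the presence of the surrounding minimization. The point is that the outer $\min_{\ell,\Sigma_\ell^{(R)}}$ fixes only the \emph{combinatorial} data — the cut and which pairs of disks carry the edges crossing it, hence the sequence of disks each subpath visits — while the inner $\max_{\Sigma_\ell}$ ranges over the adversary's remaining freedom, namely the actual sample points on those disks. Because the adversarial subpath property asks only for the longest path \emph{through a fixed sequence of disks}, the minimization does not interfere with the invariant: for whichever combinatorial structure the algorithm commits to, the nested maximization realizes the adversary's optimum for each resulting subpath. I would also verify that every interior sample point of a maximal subpath is exposed to some max operation at the level where it is first pinned down — either as a cross-cut endpoint chosen by $\max_{\Sigma_\ell}$, or as an interior choice within a subproblem that the inductive hypothesis has already maximized — so that no sample point escapes adversarial optimization.
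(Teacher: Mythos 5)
Your overall architecture---induction over the guillotine decomposition with a strengthened fixed-endpoint invariant, splitting a crossing subpath at the cut, and per-subpath independence via disjointness of the disks---is a faithful expansion of the paper's own justification, which consists only of the single sentence preceding the lemma (the paper gives no proof environment at all). But your inductive step has a genuine gap, and it sits exactly where you flag ``the main obstacle.'' You identify ``the inner maxima taken over the interiors of $P_1$ and $P_2$'' with the subproblem values $f(W_1,\Sigma_1(\Sigma_\ell))$ and $f(W_2,\Sigma_2(\Sigma_\ell))$. That identification is false: each $f$ is a \emph{minimum} (with further nested min-max below it) over all internal networks of the subwindow consistent with the boundary conditions. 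Hence, in the recursion's inner maximization, each candidate refinement $\Sigma_\ell$ is scored against an internal structure that is re-optimized \emph{as a function of} $\Sigma_\ell$, not against the structure the algorithm finally commits to---whereas condition (ix) quantifies over all interior sample-point choices for the \emph{fixed} disk sequence of the final network.

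Concretely, let $\Sigma_\ell^*$ be the maximizing refinement and let $S_1^*$ be the internal structure realized in $W_1$ under $\Sigma_1(\Sigma_\ell^*)$, and consider an adversarial deviation to a different crossing sample point, encoded by a refinement $\Sigma_\ell'$. The recursion guarantees $f(W_1,\Sigma_1(\Sigma_\ell^*))+f(W_2,\Sigma_2(\Sigma_\ell^*)) \geq f(W_1,\Sigma_1(\Sigma_\ell'))+f(W_2,\Sigma_2(\Sigma_\ell'))$, but because $f$ minimizes over structures we only know $f(W_1,\Sigma_1(\Sigma_\ell')) \leq \mathrm{len}(S_1^*,\Sigma_\ell')$, where $\mathrm{len}(S_1^*,\Sigma_\ell')$ denotes the adversarial length of the \emph{committed} sequence $S_1^*$ re-evaluated at the deviated crossing point. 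This inequality points the wrong way: nothing rules out $\mathrm{len}(S_1^*,\Sigma_\ell') > \mathrm{len}(S_1^*,\Sigma_\ell^*)$, i.e., the adversary facing the final fixed disk sequence may profit by deviating from the DP's argmax, in which case the computed subpath is not a longest path through its sequence and (ix) fails for your argument. Your inductive hypothesis pins down maximality of $P_1$ only for the endpoint actually realized in the final network; closing the induction needs maximality of the committed sequence \emph{uniformly over candidate crossing points}, and your stated resolution (``the minimization fixes only combinatorial data, so the nested maximization realizes the adversary's optimum for each resulting subpath'') asserts precisely this missing uniformity rather than proving it. A repair would have to strengthen the invariant so that, when the inner max is evaluated, the internal structures on both sides of the cut are held fixed across all refinements in $X(\Sigma_\ell^{(R)})$, or supply an exchange argument showing the DP's min-max value dominates the fixed-structure adversarial value. (Your aggregation step---deducing per-subpath maximality from maximizing the sum $f_1+f_2$, using that distinct subpaths visit disjoint disks---is fine once that crux is in place; the paper itself leaves it implicit.)
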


\subsection{Extracting an Approximating ATSP Tour}

The output of the DP algorithm is an $m$-guillotine network $G$ of minimum cost, where cost is total length, taking into account that $m$-spans are counted twice, and $m$-disk spans are counted $O(1)$ times (and are each of length at least 1). 
From the structure Theorem~\ref{thm:mguil}, we know that the total length of edges of $G$ is at most $|OPT|(1+O(1/m))$.

The fact that we accounted for the doubling of the $m$-spans in the optimization implies that we can afford to augment the edges along each $m$-span, in order that every Steiner point along an $m$-span has degree 4:  Initially, the points $\mathcal{H}$ along an $m$-span have degree 3, being either endpoints of the $m$-span (having a T-junction with an edge between two sample points of $\mathcal{G}$), or being a T-junction where an edge between two sample points of $\mathcal{G}$ is truncated, terminating on the $m$-span.  By the parity condition at the $m$-span, we know that there are an even number of T-junctions along the $m$-span, implying that we can add a perfect matching of segments along the $m$-span, joining consecutive pairs of T-junctions.  The total length of this matching is less than the length of the $m$-span, and is ``paid for'' by the doubling of the $m$-span lengths in the DP optimization.

The fact that we accounted for $O(1)$ copies of the $m$-disk spans in the optimization implies that we can afford to augment the edges of $G$ with an adversarial path through the sequence of disks stabbed by the $m$-disk span; such a path has length proportional to the length of the $m$-disk span, assuming the $m$-disk span is nontrivial in length\full{ (length at least 1)}. \full{(If the $m$-disk span has length less than 1, then it is intersecting only $O(1)$ disks, and these can each be specified as part of the subproblem, along with the $O(m)$ disks that are already being specified.)}

By the above discussion, the result of our algorithm is, then, a connected Eulerian network of length at most $|OPT|(1+O(1/m))$. From this network, we extract a tour $T$.  The tour $T$ is a cycle consisting of straight line segments joining points that are either sample points, $\mathcal{G}$, or Steiner points, $\mathcal{H}$.  Let $\pi_1,\pi_2,\ldots,\pi_k$ be the maximal subpaths along $T$ whose vertices are all sample points $\mathcal{G}$; i.e., each path $\pi_i$ has only vertices of $\mathcal{G}$ (no Steiner points $\mathcal{H}$), and every sample point of $\mathcal{G}$ that is a vertex of $T$ lies in exactly one path $\pi$.

Now, the number, $k$, of subpaths is at most the number of Steiner points $\mathcal{H}$ along the tour $T$, and this number is upper bounded by the number of Steiner points along $m$-spans in the entire network.  But, the total length of all $m$-spans is at most $O(1/m)\cdot |OPT|$, by the proof of Theorem~\ref{thm:mguil}.  This implies that $k\leq O(1/m)\cdot |OPT|$.  

The adversarial subpath property that was enforced in the dynamic programming algorithm implies that the subpaths $\pi_i$ are each adversarial -- their lengths are longest possible, for the given sequence of disks through which it passes (given that the path $\pi_i$ begins at sample point $p_i$ and ends at sample point $q_i$ as chosen by our dynamic program).  We obtain a new tour, $T'$, by chaining together the subpaths $\pi_i$, omitting any Steiner points that were along $T$.  The resulting tour $T'$ is not necessarily adversarial, but the following lemma shows that it is close to being so.

\begin{lemma}
\label{lem:adversarial}
Let $\sigma_R'$ be the order in which the disks $R$ are visited by the tour $T'$.  Then, the adversarial tour, $APX$, associated with $\sigma_R'$ has length at most $|T'|+O(1/m)\cdot |OPT|$.
\end{lemma}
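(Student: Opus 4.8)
The plan is to compare $APX$ and $T'$ \emph{segment by segment}, using the same decomposition of the cyclic order $\sigma_R'$ that is induced by the subpaths $\pi_1,\ldots,\pi_k$. Each $\pi_i$ runs through a consecutive block of disks of $\sigma_R'$, beginning at the sample point $p_i$ (in its first disk) and ending at $q_i$ (in its last disk), and in the cyclic order the last disk of $\pi_i$ is immediately followed by the first disk of $\pi_{i+1}$ (indices mod $k$). Hence $T'$ is the concatenation $|T'| = \sum_{i=1}^{k}|\pi_i| + \sum_{i=1}^{k} d(q_i,p_{i+1})$, the subpath lengths plus the $k$ \emph{junction edges} that chain consecutive subpaths.

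First I would decompose $APX$ along the identical block structure. Let $a(\cdot)$ denote the adversary's re-optimized point in each disk for $\sigma_R'$, so that $a(p_i),a(q_i)$ are its choices in the first and last disks of the $i$-th block. Then $APX$ splits into a path $\rho_i$ through the $i$-th block (from $a(p_i)$ to $a(q_i)$, with adversarial interior points) together with the junction edges $a(q_i)$--$a(p_{i+1})$, giving $|APX| = \sum_i |\rho_i| + \sum_i d(a(q_i),a(p_{i+1}))$.

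The crux is that $\rho_i$ and $\pi_i$ traverse the \emph{same} disk sequence but with \emph{different} endpoints, so the adversarial subpath property (condition (ix)), which certifies $\pi_i$ as longest \emph{for its fixed endpoints} $p_i,q_i$, does not apply to $\rho_i$ directly. I would bridge this gap with a triangle-inequality charging argument: replacing the endpoints $a(p_i),a(q_i)$ of $\rho_i$ by $p_i,q_i$ (within the same disks) yields a path on the same interior points from $p_i$ to $q_i$, of length at most $|\pi_i|$ by the adversarial subpath property, while undoing the two replacements costs at most $d(a(p_i),p_i)+d(a(q_i),q_i)$ on the boundary edges. Thus
$$|\rho_i| \;\le\; |\pi_i| + d(a(p_i),p_i) + d(a(q_i),q_i),$$
and similarly each junction edge obeys $d(a(q_i),a(p_{i+1})) \le d(q_i,p_{i+1}) + d(a(q_i),q_i) + d(a(p_{i+1}),p_{i+1})$. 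I expect this endpoint-charging step to be the main obstacle, precisely because the adversary is free to move the block endpoints (the former Steiner-point attachment sites) when it re-optimizes, so the per-block lengths are \emph{not} directly comparable and must be reconciled through these displacement terms.

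Finally I would sum both families of inequalities over all $i$. The leading terms collect to $\sum_i|\pi_i| + \sum_i d(q_i,p_{i+1}) = |T'|$, while each displacement $d(a(p_i),p_i)$ and $d(a(q_i),q_i)$ is counted at most twice (once from its own block, once from an incident junction). Since every disk has diameter $1$, each displacement is at most $1$, so the total overcharge is at most $4k$. As established just before the lemma, $k \le O(1/m)\cdot|OPT|$, whence $|APX| \le |T'| + 4k \le |T'| + O(1/m)\cdot|OPT|$, as claimed.
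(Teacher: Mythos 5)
Your proposal is correct and is essentially the paper's own argument: both decompose $APX$ along the block structure induced by the subpaths $\pi_i$, invoke the adversarial subpath property for the fixed endpoints $p_i,q_i$, and charge the adversary's relocated endpoints via the triangle inequality at cost at most one unit-diameter each, counted twice, giving $|APX|\le |T'|+4k\le |T'|+O(1/m)\cdot|OPT|$. The only difference is presentational --- the paper phrases the charging as a walk around $T'$ with detours to $p_i^*$ and $q_i^*$, whereas you carry out the same bookkeeping edge by edge.
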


\begin{proof}
For each subpath $\pi_i$ in our approximate tour $T'$ let $APX(\pi_i)$ be the adversarial path computed over the sequence of disks associated with $\pi_i$ in the final adversarial tour associated with $\sigma_R'$.  Similarly let $p_i^*$ (resp., $q_i^*$) be the point chosen in the first (resp., last) disk along $\pi_i$ in $APX$.  Assume that we have chosen the points $p_i$ (resp., $q_i$) in the first (resp., last) disk along $\pi_i$ in our extracted tour $T'$;  see Figure ~\ref{fig:path_extraction} for an illustration.

From the adversarial property of our computed solution we know that the length of the path $|\pi_i|$ computed from $p_i$ to $q_i$ is as long as possible over choices in intermediate disks.  Therefore we can over estimate the length of $APX$ by walking around $T'$ adding at most two unit diameter detours to the first and last disk in each sub-path $\pi_i$.  That is we can begin at $p_i$ detour to $p_i^*$ and back, follow $\pi_i$ until we reach $q_i$ an then detour from $q_i$ to $q_i^*$ and back and follow $T'$ to $p_{i + 1}$ and so on.  By triangle inequality and the fact that $\pi_i$ is a longest path for fixed choices of $p_i$, and $q_i$ this over estimates the length of $APX$.

We have $|APX| \leq |T'| + 4k$, and therefore $|APX| \leq |T'| + O(1/m)|OPT|$, because, as previously stated $k \leq O(1/m)|OPT|$.
\end{proof}

\begin{figure}
\centering
\includegraphics[scale = 0.5]{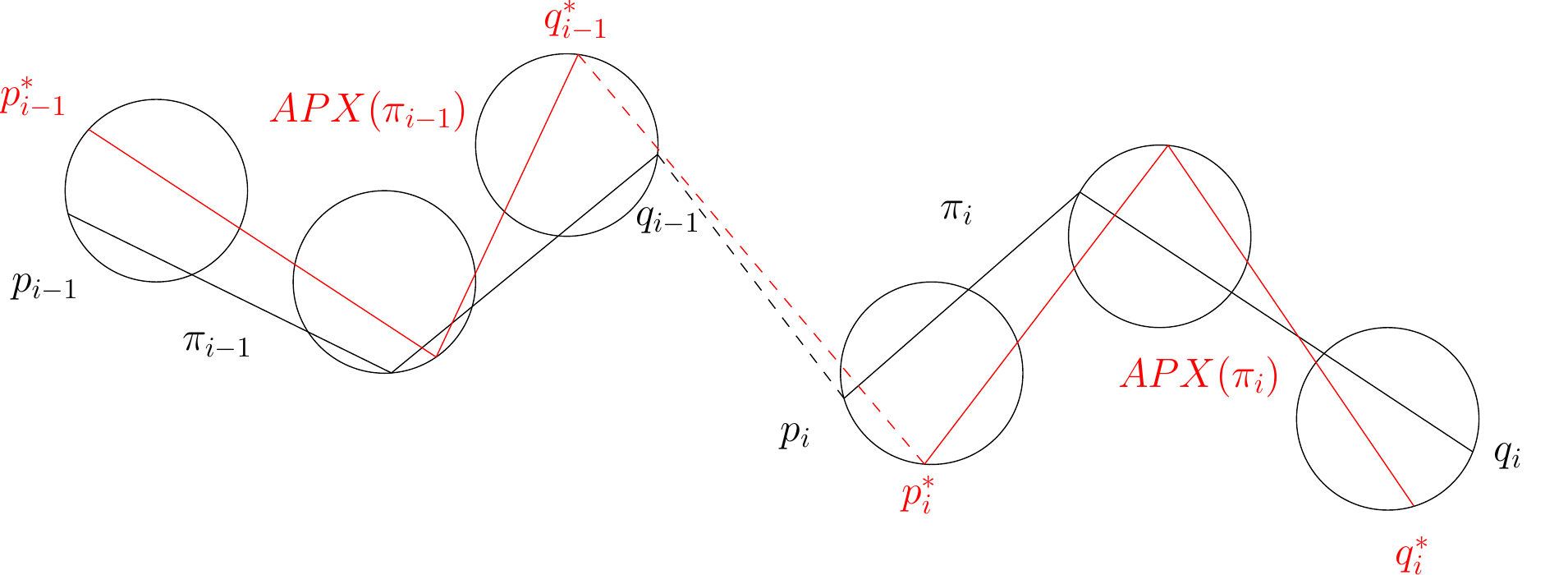}
\caption{Bounding the adversarial increase over extracted approximate tour $T'$.}
\label{fig:path_extraction}
\end{figure}

Since we know that the computed $T$, and thus $T'$, has length at most $|OPT|(1+O(1/m))$, Lemma~\ref{lem:adversarial} implies that the overall solution extracted from our computed tour yields a PTAS\full{ for ATSP}.

\full{{\em Remarks:}  (1) We have focussed here on the case of disjoint equal-radius disks in the plane.  Our methods apply more generally to the case of fat regions that are nearly the same size, with a bounded depth of overlap. (2) The case of disjoint fat regions of arbitrary sizes has a PTAS for the TSPN problem~\cite{mitchell2007ptas}; one may expect that our methods can be extended to that case as well, but our structure theorem on the bounded depth of fat edges would have to be reformulated.  (3) More efficient (possibly nearly linear time) algorithms should be possible based on applying the methods to a $t$-spanner of $\mathcal{G}$; we leave this to future work.
}





\subsection*{Acknowledgement}
This research was partially supported by the National Science Foundation (CCF-1526406).

\end{document}